\tikzset{wave/.style={decorate, decoration=snake}}
\def\nudge{.5}
\tikzset{axis/.style={ultra thick, Red!75!black, -latex, shorten <=-\nudge cm, shorten >=-2*\nudge cm}}
\tikzset{line/.style={ultra thick,black}}
\definecolor{MyBlue}{rgb}{0.25,0.5,0.75}
\colorlet{NextBlue}{MyBlue!20}
\colorlet{SecondBlue}{MyBlue!40}
\definecolor{purple}{rgb}{0.7,0,0.7}
\def\be{\begin{equation}}
\def\ee{\end{equation}}
\newcommand{\CI}{\mathcal{I}}
\newcommand{\CS}{\mathcal{S}}
\newcommand{\CQ}{\mathcal{Q}}
\newcommand{\CX}{\mathcal{X}}
\newcommand{\CC}{\mathcal{C}}
\newcommand{\IZ}{\mathbb{Z}}
\renewcommand{\Re}{{\rm Re\,}}
\renewcommand{\Im}{{\rm Im\,}}
\newcommand{\dd}{\mathrm{d}}
\newtheorem{theorem}{Theorem}
\newtheorem{remark}{Remark}
\title{
BPS Spectra and Algebraic Solutions of Discrete Integrable Systems
}
\author[a]{Fabrizio  Del Monte}
\affiliation[a]{School of Mathematics and Statistics, University of Sheffield, Hounsfield Road, Sheffield S3 7RH, United Kingdom}
\emailAdd{
f.delmonte.mp@gmail.com
}
\abstract{This paper extends the correspondence between discrete Cluster Integrable Systems and BPS spectra of five-dimensional $\mathcal{N}=1$ QFTs on $\mathbb{R}^4\times S^1$ by proving that algebraic solutions of the integrable systems are exact solutions for the system of TBA equations arising from the BPS spectral problem. This statement is exemplified in the case of M-theory compactifications on local del Pezzo Calabi-Yau threefolds, corresponding to q-Painlev\'e equations and $SU(2)$ gauge theories with matter. A degeneration scheme is introduced, allowing to obtain closed-form expression for the BPS spectrum also in systems without algebraic solutions. By studying the example of local del Pezzo 3, it is shown that when the region in moduli space associated to an algebraic solution is a ``wall of marginal stability'', the BPS spectrum contains states of arbitrarily high spin, and corresponds to a 5d uplift of a four-dimensional nonlagrangian theory.}
\begin{document}
\maketitle
\setlength{\epigraphwidth}{.5\textwidth}
\epigraph{Nature uses only the longest threads to weave her patterns, so that each small piece of her fabric reveals the organization of the entire tapestry.}{Richard P. Feynman}

\section{Introduction}

In recent years, there has been a new wave of interest in the study of the BPS spectrum of five-dimensional Quantum Field Theories with eight supercharges, after a decade of progress with the four-dimensional case. The strongest motivation comes from observing that compactification of type IIA string theory on a local Calabi-Yau threefold doesn't simply produce a standard four-dimensional $\mathcal{N}=2$ theory, but rather a five-dimensional one \cite{Seiberg:1996bd,Morrison:1996xf,Douglas:1996xp,Intriligator:1997pq}: the hidden presence of the M-theory circle leads to an infinite number of fields in the four-dimensional theory, which are Kaluza-Klein (KK) modes on the five-dimensional circle, so that the BPS spectrum of such theories contains highly nontrivial nonperturbative information about M-theory itself\footnote{In this paper we study only the BPS \textit{particle} spectrum, although in general, five-dimensional theories also have more exotic BPS string states.}. Prototypical examples of local Calabi-Yau threefolds appearing in this context are total spaces of the canonical bundles over a complex surfaces $S$, where $S$ is either a $\mathbb{P}^1\times\mathbb{P}^1$ or a (pseudo-) del Pezzo surface $dP_n$, these latters being blowups of $\mathbb{P}^2$ at $n$ (possibly nongeneric) points. Apart from the case of local $\mathbb{P}^2$, the 5d theories resulting from these geometries admit low-energy $SU(2)$ gauge theory phases with matter.

The advantage of a string-theoretic mindset towards these CFTs is that stable BPS states are realized geometrically by D0, D2, D4-branes wrapping holomorphic cycles inside a resolution $X$
of the (typically singular) Calabi-Yau geometry, so that the lattice of BPS charges is the even cohomology lattice with compact support
\begin{equation}\label{eq:GammaIntro}
\Gamma=H^0_{cpt}(X)\oplus H^2_{cpt}(X)\oplus H^4_{cpt}(X).
\end{equation}
BPS states are then mathematically described as objects in the derived category $D^b(X)$ of coherent sheaves on $X$.
By this correspondence, exact computations of BPS spectra for five-dimensional theories have nontrivial counterpart in the Donaldson-Thomas theory of the corresponding geometry \cite{Bridgeland:2016nqw}. 

Our main tool will be the so-called \textit{BPS quiver} of the theory, a term introduced for four-dimensional theories in \cite{Cecotti:2011rv} and generalized to the present context in \cite{Closset:2019juk}, appearing also in related physics literature under the name of fractional brane quiver \cite{Douglas:1996sw,Hanany:2005ve,Franco:2005rj,Feng2005,Yamazaki:2008bt}.
The determination of the BPS spectrum for five-dimensional theories on a circle is typically much more involved than for four-dimensional ones, and until recently very little was known. In \cite{Bonelli2020}, inspired by the recently discovered relation between partition functions of five-dimensional gauge theories (or equivalently, Topological String partition functions \cite{Eguchi:2003sj,Taki:2007dh}) and q-Painlev\'e tau functions \cite{Bershtein2016,Bonelli2017,Bershtein2017,Jimbo:2017,Matsuhira2018,Bonelli2020}, a new strategy for the computation of the BPS spectrum was proposed. The general idea behind this approach is to introduce a discrete integrable system from the underlying Calabi-Yau geometry, which encodes hidden quantum symmetries of the five-dimensional QFT. When applied to an appropriate set of elementary states, the discrete time evolution of the integrable system generates the BPS spectrum. Different examples were considered, and a conjectural expression advanced for the case of local $\mathbb{P}^1\times\mathbb{P}^1$ in \cite{Closset:2019juk} was readily reproduced with this new method, that outlined a clear path forward for more general cases.

The relation between Cluster Integrable Systems and BPS spectra was further clarified in the subsequent paper \cite{DelMonte2021}, where it was shown how to associate to the Cluster Integrable System special chambers in the moduli space of the theory  (technically, in the moduli space of stability conditions of the $CY_3$), dubbed collimation chambers, in which the BPS spectrum can be computed exactly. These chambers have the property that the spectrum is comprized of infinite towers of states, with central charges all limiting to the positive real axis. The name comes from the analogy between rays of light and rays on the complex plane of central charges of the BPS states, as we might think of the BPS spectrum as a light beam, with rays all travelling in the same direction confined in a given width, as depicted in Figure \ref{Fig:P1P1coll}.

\begin{figure}[h]
\begin{center}
\begin{tikzpicture}[scale=1]
  \foreach \x in {-6.5,...,6.5}{
       \draw[->, line width = .5mm] (0,0)--(\x,1);
       \draw[->, line width = .5mm] (0,0)--(\x,-1);}
       \draw[->,color=red,line width = .5mm] (0,0)--(1,0);
       \draw[->,color=red,line width = .5mm] (0,0)--(5,0);
       \draw[->,color=red,line width = .5mm] (0,0)--(-1,0);
       \draw[->,color=red,line width = .5mm] (0,0)--(-5,0);
       \node at (5.5,0) {\Large\dots};
       \node at (-5.5,0) {\Large\dots};
       \node at (7,1) {\Large\dots};
       \node at (-7,1) {\Large\dots};
       \node at (7,-1) {\Large\dots};
       \node at (-7,-1) {\Large\dots};
\end{tikzpicture}
\end{center}
\caption{Schematic picture of central charges in the collimation chamber in local $\mathbb{P}^1\times\mathbb{P}^1$. The black arrows correspond to towers of hypermultiplets, while the red arrows to vector multiplets and corresponding Kaluza-Klein modes. Towers of hypermultiplets such as the one in the figure are also called Peacock patterns.}
\label{Fig:P1P1coll}
\end{figure}
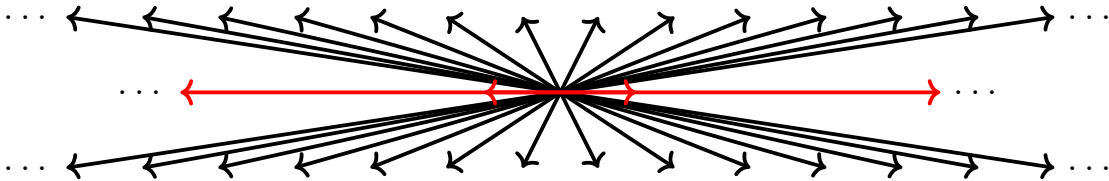

Another aspect of the relation between BPS spectra and Cluster Integrable Systems comes from the fact that the String theory corrections to the central charges are described by a set of TBA equations that appeared for the first time in the work of Gaiotto, Moore and Neitzke on four-dimensional $\mathcal{N}=2$ QFTs \cite{Gaiotto:2008cd,Gaiotto:2014bza}, and were then shown to describe the D-instanton corrections to the central charges in type IIB String Theory \cite{Alexandrov:2013yva,Alexandrov:2021prq,Alexandrov:2021wxu}:
\be\label{eq:TBA-conformal}
\begin{split}
	\log \CX_\gamma(\epsilon) 
	=
	\frac{Z_\gamma }{\epsilon} 
	-\frac{\epsilon}{\pi i}\sum_{\gamma'>0}\Omega(\gamma')\langle\gamma,\gamma'\rangle
	\int_{\ell_{\gamma'}}
	\frac{d \epsilon'}{(\epsilon')^2 - (\epsilon)^2}
	\log(1-\sigma(\gamma') \CX_{\gamma'}(\epsilon')) \,.
	\\
\end{split}
\ee
Here $\gamma\in \Gamma$ is a vector in the lattice of BPS charges \eqref{eq:GammaIntro} and $Z_\gamma$ is the central charge of the corresponding BPS state. $\Omega(\gamma)$ is the BPS degeneracy, coinciding with the DT invariant of the coherent sheaf $\gamma$, and $\sigma(\gamma)=-1$ if $\Omega(\gamma)=1$, while $\sigma(\gamma)=1$ if $\Omega(\gamma)=-2$ (we will not need other cases here). The $\CX_\gamma$ can be also regarded as solutions to the Bridgeland Riemann-Hilbert problem, which endows the moduli space of stability conditions of the  $CY_3$ with a geometric structure known as Joyce structure \cite{Bridgeland2019}. In \cite{DelMonte2022a}, it was shown that the TBA equation \eqref{eq:TBA-conformal} in the collimation chamber of local $\mathbb{P}^1\times\mathbb{P}^1$ can be rephrased as the q-Painlev\'e equation of symmetry type $A_1^{(1)}$. Furthermore, it was observed that within the collimation chambers there can existed a \textit{fine-tuned stratum} where the solution to \eqref{eq:TBA-conformal} receives no $\epsilon$-corrections, and corresponds to the \textit{algebraic solution} of the corresponding q-Painlev\'e equation, so that it was conjectured that such exact solutions should correspond to algebraic solutions of the Cluster Integarble System.

\textbf{Contents and Results:}

\begin{table}
\centering
\begin{tabular}{ |c|c|c|c|c| } 
 \hline
 Local $CY_3$ & Quiver & Collimation Chamber & BPS spectrum & Alg. Sol. \\
 \hline
$dP_5$ & Figure \ref{Fig:QuiverdP5} & Eq. \eqref{eq:CC1defdP5} & Eq. \eqref{eq:SpectrumdP5}* & Eq. \eqref{eq:dP5algsol}  \\ \hline
$dP_4$ & Figure \ref{Fig:QuiverdP4} & Eq. \eqref{eq:CC1defdP4} & Eq. \eqref{eq:SpectrumdP4}  & No \\ \hline
$dP_3$ & Figure \ref{Fig:QuiverdP3} & Eq. \eqref{eq:CC1defdP3} & Eq. \eqref{eq:SpectrumdP3}*  & Eq. \eqref{eq:exsolP3} \\ 
		&		& Eq. \eqref{eq:CC1defdP3p} & Eq. \eqref{eq:SpectrumdP3p}, \eqref{eq:SpectrumdP3real} & No \\ \hline
$dP_2$ & Figure \ref{Fig:QuiverdP2} & Eq. \eqref{eq:CC1defdP2} & Eq. \eqref{eq:SpectrumdP2}  & No \\ \hline
$dP_1$ & Figure \ref{Fig:QuiverdP1} & Eq. \eqref{eq:CC1defdP1} & Eq. \eqref{eq:SpectrumP1P1}  & No \\ \hline
$\mathbb{P}^1\times\mathbb{P}^1$ & Figure \ref{Fig:QuiverP1P1} & Eq. \eqref{eq:CC1defP1P1} $^\dagger$ & Eq. \eqref{eq:SpectrumP1P1}* & Eq. \eqref{eq:dP5algsol} \\
\hline
 \multicolumn{5}{| c |}{$*$: from \cite{DelMonte2021} \qquad $\dagger:$ from \cite{DelMonte2022a}}\\
\hline
\end{tabular}
\caption{Main Results of the paper}\label{Table:MainRes}
\end{table}

The main results of this paper consist of Theorem \ref{thm:ExSo}, together with the results summarized in Table \ref{Table:MainRes}. 

After briefly recalling how BPS quivers arise in five-dimensional SCFTs, in Section \ref{sec:TBAalg} we generalize and make precise the identification of exact solutions to the TBA equations \eqref{eq:TBA-conformal} with algebraic solutions of q-Painlev\'e equations, and more generally cluster integrable systems. To this end we prove Theorem \ref{thm:ExSo}, stating a precise set of conditions under which such an exact solution exists. These conditions are equivalent to the invariance under certain \textit{folding} transformation of the corresponding q-Painlev\'e equation \cite{Bershtein:2021gkq}, which is a property characterizing their algebraic solutions, so that the conjecture of \cite{DelMonte2022a} is effectively proven.

The theorem is then applied in Section \ref{sec:degens}, where the exact solutions are written explicitly for the cases of local $dP_5$ and local $dP_3$, realizing five-dimensional $SU(2)$ Super Yang-Mills with four and two fundamental hypermultiplets respectively. In the case of $dP_5$ and $dP_3$, starting from the algebraic solutions it is possible to obtain an infinite number of closed-form solutions to the TBA equations, which are rational solutions of q-Painlev\'e VI and III respectively. These are obtained by applying B\"acklund transformations to the algebraic solution, physically corresponding to appropriate sequences of dualities of the 5d theory, and while they can be written down explicitly, display all-order $\epsilon$ corrections, in contrast to the algebraic solutions.

The problems encountered in \cite{DelMonte2021} in finding collimation chambers for local $dP_4,\, dP_2,\, dP_1$ are explained by their lack of symmetry with respect to the other cases, signaled by the absence of folding transformation in the corresponding q-Painlev\'e equation. Nonetheless, in Section \ref{sec:degens} we implement a degeneration procedure that produces new collimation chambers, with explicit BPS spectrum and stability conditions, for these missing cases, completing the picture of collimation chambers and BPS spectra for five-dimensional $SU(2)$ super Yang-Mills up to $N_f=4$. The degeneration procedure amounts geometrically to the blow-down of exceptional 2-cycles in the local del Pezzo geometries, or equivalently to the holomorphic decoupling of hypermultiplets in the corresponding gauge theory. At the level of integrable systems it is the confluence of q-Painlev\'e equations, as described by Sakai's classification by symmetry type \cite{Sakai2001} in Figure \ref{Fig:Sakai}.

In Section \ref{sec:dP3AD} we show what happens when the stability condition associated to an algebraic solution lies on a wall of marginal stability.
By perturbing away from the algebraic solution, it is possible to find a stability condition which would still correspond to a collimation chamber, since it yields infinite towers of hypermultiplets which accumulate on the positive real axis. However, instead of finding mutually local vector multiplets,
on the real axis there is a (likely infinite) number of mutually non-local higher spin states, so that we are still lying on a wall of marginal stability.
The structure of the quiver suggests that a further deformation of this chamber might be related to a 5d uplift of a 4d Argyres-Douglas theory, as was pointed out in \cite{Bonelli2020}.

Finally, in Section \ref{sec:Concl} we make several concluding remarks about possible generalizations of this work: these include a yet unexplained observation on the connection between the BPS spectrum of 5d pure $SU(2)$ Super Yang-Mills (local $\mathbb{P}^1\times\mathbb{P}^1$) and the 4d $\mathcal{N}=2^*$ theory (2-Markov quiver), and extension to higher-rank gauge theories and five-dimensional uplifts of $E_n$ Minahan-Nemeschansky theories.

\begin{figure}[t]
\begin{center}
\begin{tikzpicture}[scale=.9]
\node at ($(0,0)$) {$E_8^{(1)}$};
\node at ($(2,0)$) {$E_7^{(1)}$};
\node at ($(4,0)$) {$E_6^{(1)}$};
\node at ($(6,0)$) {$E_5^{(1)}$}; 		
\node at ($(8,0)$) {$E_4^{(1)}$};
\node at ($(10,0)$) {$E_3^{(1)}$};
\node at ($(12,0)$) {$E_2^{(1)}$};
\node at ($(14,0)$) {$A_1^{(1)}$};
\node at ($(16,0)$) {$A_0^{(1)}$};
\node at ($(14,2)$) {$\begin{matrix}A_1^{(1)} \\|\alpha|^2=8\end{matrix}$};

\node at (14,0.7) {$\underline{\mathbb{P}^1\times\mathbb{P}^1}$};
\node at (14,3) {$\underline{dP_1}$};
\node at (12,0.7) {$\underline{dP_2}$};
\node at (10,0.7) {$\underline{dP_3}$};
\node at (8,0.7) {$\underline{dP_4}$};
\node at (6,0.7) {$\underline{dP_5}$};

\draw[thick,->](0.5,0) to (1.5,0);
\draw[thick,->](2.5,0) to (3.5,0);
\draw[thick,->](4.5,0) to (5.5,0);
\draw[thick,->](6.5,0) to (7.5,0); 		
\draw[thick,->](8.5,0) to (9.5,0);
\draw[thick,->](10.5,0) to (11.5,0);
\draw[thick,->](12.5,0) to (13.5,0);

\draw[thick,->](12.4,0.5) to (13.2,1.3);
\draw[thick,->](14.7,1.3) to (15.5,0.5);

\draw[thick,->](6.3,-0.5) to (7.1,-1.8);
\draw[thick,->](8.3,-0.5) to (9.1,-1.7);		
\draw[thick,->](10.3,-0.5) to (11.1,-1.8);
\draw[thick,->](12.4,-0.5) to (13.1,-1.5);
\draw[thick,->](14.3,-0.5) to (15.1,-1.8);

\node at ($(7.5,-2)$) {$D_4^{(1)}$};
\node at ($(9.5,-2)$) {$D_3^{(1)}$};		
\node at ($(11.5,-2)$) {$D_2^{(1)}$};
\node at ($(13.5,-2)$) {$\begin{matrix}A_1^{(1)} \\|\alpha|^2=4\end{matrix}$};
\node at ($(15.5,-2)$) {$A_0^{(1)}$};

\draw[thick,->](8,-2) to (9,-2);
\draw[thick,->](10,-2) to (11,-2);				
\draw[thick,->](12,-2) to (12.7,-2);
\draw[thick,->](14.3,-2) to (15,-2);

\draw[thick,->](9.8,-2.5) to (10.6,-3.8);				
\draw[thick,->](11.8,-2.5) to (12.6,-3.8);
\draw[thick,->](14,-2.7) to (14.6,-3.8);

\node at ($(11,-4)$) {$A_2^{(1)}$};
\node at ($(13,-4)$) {$A_1^{(1)}$};				
\node at ($(15,-4 )$) {$A_0^{(1)}$};

\draw[thick,->](11.5,-4) to (12.5,-4);				
\draw[thick,->](13.5,-4) to (14.5,-4);

\draw[thick,->] (10,-0.5) to (10.8,-3.5);
\draw[thick,->] (14.5,-0.2) to[out=315,in=45] (16.2,-2.5)  to[out=210,in=30] (13.5,-3.7) ;
\draw[thick,->] (16,-0.5) to[out=290,in=90] (17,-2) to[out=270,in=45] (15.5,-3.7);
\draw[thick,->](14.7,1.3) to (15.3,-1.5);
\draw[thick,->] (15.5,-2.5) to(15.3,-3.5);

\end{tikzpicture}
\end{center}
\caption{Sakai's Classification of discrete Painlev\'e equations by symmetry type. The affine root lattice coincides with the flavour sublattice of the corresponding geometry, above which we wrote down the corresponding geometry.}\label{Fig:Sakai}
\end{figure}
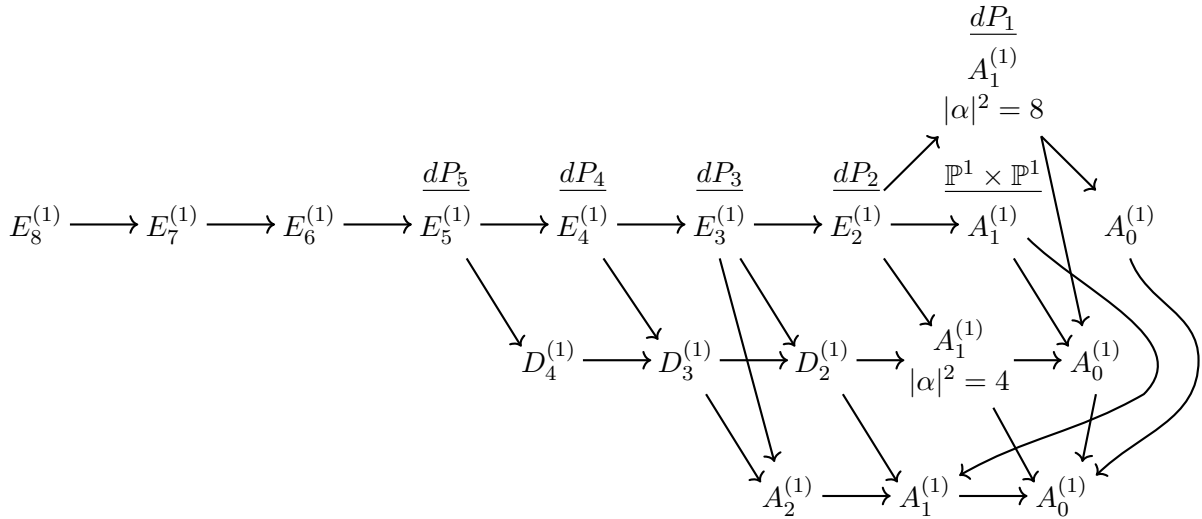
\section*{Acknowledgements:}
The author would like to thank M. Bershtein, T. Bridgeland, A. Grassi, K. Ito, N. Joshi, P. Longhi, B. Pioline, P. Roffelsen for invaluable discussions at various stages of this work. 

\section{TBA equations and algebraic solutions of Cluster Integrable Systems}\label{sec:TBAalg}

Let us start with some basic terminology. By a quiver $\CQ$, it is meant an oriented graph consisting of nodes connected by arrows. Here we always consider quivers with no arrows from a node to itself (loops), nor pairs of arrows connecting two nodes in opposite directions (2-cycles), and we label the nodes by numbers $1,\dots,|\CQ|$, where $|\CQ|$ denotes the size of the quiver, i.e. the number of its nodes. The quiver is then encoded in its (antisymmetrized) adjacency matrix $B$, whose entries $B_{ij}$ are equal to the number of arrows from node $i$ to node $j$, with the convention that outgoing arrows carry positive sign, while incoming arrows carry negive sign. A representation of a quiver $\CQ$ is an assignment of a vector space $V_i$ to each node $i$ of $\CQ$, and a linear map $\phi_{ij}$ for each arrow $i\rightarrow j$. In the context of quiver representation theory, to every node $i$ of the quiver one can associate an angle $\theta_i$, and the collection $\{\theta_i\}_{i=1}^{|\CQ|}$ is called a $\theta$-stability condition \cite{King1994MODULIOR}\footnote{In the present context of D-branes on Calabi-Yau threefolds, this is also related to Douglas $\Pi$-stability \cite{Douglas:2000ah}, mathematically formalized by the notion of Bridgeland stability \cite{bridgeland2007stability}.}, or simply a stability condition since there will not be any chance of confusion in this paper. The space of values that the $\theta_i$ can assume is called its moduli space of stability conditions.

In this paper, all the quivers are \textit{BPS quivers}, arising in the following way: BPS particles of M-theory on a Calabi-Yau threefold $X$ are described by M2- or M5-branes wrapping compact even-dimensional cycles in $X$, which are described by D0, D2, D4-branes in type IIA String Theory. They are labeled by their BPS charges, which are vectors of Chern characters of compactly supported coherent sheaves
\begin{equation}
\gamma\in\Gamma:=H_{cpt}^{even}(X,\mathbb{Z})=H^0_{cpt}(X,\mathbb{Z})\oplus H^2_{cpt}(X,\mathbb{Z})\oplus H^4_{cpt}(X,\mathbb{Z}).
\end{equation}
The low-energy dynamics of these particles is described by an $\mathcal{N}=4$ supersymmetric quantum mechanics associated to the quiver \cite{Douglas:1996sw,Denef:2002ru,Alim:2011ae}, typically determined by dimer model/brane tiling techniques (see \cite{Kennaway:2007tq,Yamazaki:2008bt,Franco:2017jeo} for comprehensive reviews of brane tilings) and the generators $\gamma_i$ of $\Gamma$ associated to the nodes of the quiver are the so-called fractional branes of the Calabi-Yau, which are hypermultiplet states. The adjacency matrix of the quiver is the antisymmetric Euler pairing in the basis $\gamma_i$ of fractional branes, which is identified with the physical Dirac pairing of the corresponding BPS states. For the local $CY_3$'s that we consider, which are total spaces of canonical bundles over an algebraic surface $S$, this is
\begin{align}
B_{ij}=\langle\gamma_i,\,\gamma_j\rangle:=\chi\left(\mathcal{E}_{\gamma_i},\,\mathcal{E}_{\gamma_j} \right)-\chi\left(\mathcal{E}_{\gamma_i},\,\mathcal{E}_{\gamma_j} \right), && \chi\left(\mathcal{E}_\gamma,\mathcal{E}_{\gamma'} \right)=\int_S\mathrm{ch}(\mathcal{E}_{\gamma}^\lor)\mathrm{ch}(\mathcal{E}_\gamma)\mathrm{Td}(S),
\end{align}
where $\mathrm{ch}$ and $\mathrm{Td}$ denote respectively the Chern and Todd class, $\mathcal{E}_\gamma$ is a sheaf with Chern vector gamma and $\mathcal{E}_{\gamma}^\lor$ is its dual. We say that two states are \textit{mutually local} if they have vanishing pairing. When the central charges of two mutually nonlocal states are aligned, they are called marginally stable, and such pathological regions of the moduli space of stability conditions are called walls of marginal stability.

The central charge, a linear function $Z:\Gamma\rightarrow\mathbb{C}$, describes the mass of BPS states through the BPS bound $M_\gamma=|Z_\gamma|$. Its phase $\theta_\gamma$ give the Fayet-Ilioupoulos couplings of the quantum mechanical model, and the collection $\theta_{\gamma_i}$ is called a stability condition \footnote{The precise relation between central charge phases, various notions of stability, and Fayet Ilioupoulos couplings requires to introduce some more terminology from quiver representation theory that will not be used in the rest of the paper, see Section 2 of \cite{DelMonte2021} for more details and references.}. The stable representations of the quiver are the stable BPS particles of the QFT, and they can be depicted by ray diagrams, where each BPS state is represented through the corresponding vector in the complex plane of central charges. Because of this, we will refer with slight abuse of terminoloy to the $Z_{\gamma_i}$ as stability data. For the five-dimensional theories considered in this paper, there is always a preferred direction in the plane of central charges, the real axis, since there always is a D0-brane state (skyscraper sheaf) representing Kaluza-Klein modes on the 5d circle with central charge
\begin{equation}
Z_{D0}=\frac{2\pi}{R},
\end{equation}
where $R$ is the radius of the five-dimensional circle. In such theories, one usually expects the real axis to constitute an accumulation ray in the plane of central charges, since the Kaluza-Klein towers of states coming from dimensional reduction along $S^1$ are realized in String Theory by towers of D0-branes with central charge $Z_{n\,D0}=2\pi n/R$. Besides the accumulation ray along the real axis, it is a general feature of supersymmetric theories that ``higher'' spin multiplets, which in this context means any state which is not a hypermultiplet, are either limiting rays of an infinite sequence of hypermultiplet states, or they are contained within a cone in the complex plane of central charges, whose boundaries are limiting rays. In fact, one can classify the possible types of chambers in the moduli space of the theory based on how the central charges of BPS states are organized:

\begin{itemize}
\item The simplest chambers are \textit{finite}, with a finite number of stable states which are hypermultiples. This is not possible in 5d due to the KK modes, so our chambers will always be infinite.
\item If there is only one accumulation ray on which lies the central charge of a vector multiplet, the chamber is called \textit{tame}
\item  More generally, one will have various accumulation rays, and between them there will be a cone where it is expected to find particles of arbitrary high spin organized in Regge trajectories \cite{Galakhov:2013oja,Cordova:2015vma}. Such a chamber is called a \textit{wild} chamber.
\item In the five-dimensional setting the real axis is an accumulation ray in the complex plane of central charges due to the presence of KK modes. This means that in order to have a tame chamber it will be necessary for all the vector multiplet states to have real central charges, and to avoid walls of marginal stability they must be mutually local. We can relax this condition, and allow the presence of higher spin states, as long as they also lie on the real axis and are mutually local: a chamber with these properties was named \textit{collimation chamber} in \cite{DelMonte2021}, as all the accumulation rays are collimated on the real axis. Although the notion of collimation chamber and that of tame chamber are in principle distinct, all the known examples of collimation chambers are also tame.
\end{itemize}

\subsection*{The mutation method}

Given a quiver and a stability condition, one can obtain (at least part of) the BPS spectrum by using the mutation method \cite{Alim:2011ae,Alim:2011kw}. Let us briefly recap the main ideas behind this procedure. The nodes of a BPS quiver correspond to an integral basis of simple (hypermultiplet) objects in the charge lattice, i.e. they are indecomposable and any other state can be written in terms of them as a linear combination with positive coefficients. Furthermore, one can choose an appropriate half-plane, referred to as the positive half-plane, where lie all the central charges of the quiver nodes. This amounts to a choice of what we call particle and antiparticle, since the central charge of an antiparticle is the opposite of the central charge of the corresponding particle.

If we start to rotate clockwise the choice of half-plane, at some point the ray of a BPS state for some node of the quiver will exit the positive half-plane, inducing a change in the quiver description corresponding to a mutation of the BPS quiver at the corresponding node \cite{Berenstein:2002fi}. If we mutate at the node $k$, we will have a new basis of the charge lattice, and the antisymmetric pairing must change accordingly, as:
\begin{align}\label{eq:Bmutation}
	\mu_k(B_{ij})= \begin{cases}
		-B_{ij}, & i=k\text{ or }j=k, \\
		B_{ij}+\frac{B_{ik}|B_{kj}|+B_{kj}|B_{ik}|}{2}, & \text{ otherwise}.
	\end{cases}
\end{align}
\begin{equation}\label{eq:gammamutation}
\mu_k(\gamma_j)= \begin{cases}
-\gamma_j, & j=k, \\
\gamma_j+[B_{jk}]_+\, \gamma_k, & \text{otherwise}.
\end{cases}
\end{equation}
This is just a change of basis and the new quiver just corresponds to a dual description of the same physics, so that the charges of the new nodes of the quiver must have been also in the original spectrum. By iterating this procedure, one produces stable hypermultiplet states in a given chamber, with higher spin multiplets appearing as limiting vectors of infinite sequences of mutations. In finite chambers, the  mutation  method exhausts the whole BPS spectrum. For collimation chambers, it is possible to obtain in this way all the hypermultiplet states, and then use additional permutation symmetries of the quiver to determine the states on the limiting ray \cite{DelMonte2021}. 

\subsection*{Exact solutions of TBA equations}

Once the BPS spectrum in a particular chamber is known, it is possible to formulate the so-called \textit{BPS Riemann-Hilbert problem} \cite{Bridgeland2019} associated to a $CY_3$ as the problem of solving the system \eqref{eq:TBA-conformal} of TBA equations \cite{Alexandrov:2021wxu}. We will now prove that under certain assumptions, such problem admits a classically exact solution, i.e. one with no $\epsilon$-corrections. 

\begin{theorem}\label{thm:ExSo}
Under the following assumptions:
\begin{enumerate}
\item There exist a permutation symmetry $\pi$ of the quiver and of the BPS spectrum, acting nontrivially on the charges $\gamma_i$, such that $\pi^N=id$ for some $N>1$.
\item $\pi$ is such that $\{\gamma_i+\pi(\gamma_i)+\dots\pi^{N-1}(\gamma_i)\}_{i=1}^{|\CQ|}\subseteq\Gamma_f$;
\end{enumerate}
Then the stability condition 
\begin{align}\label{eq:algstabthm}
Z_{\gamma_i}=Z_{\pi(\gamma_i)}
\end{align}
yields a semiclassically exact solution of the TBA equations \eqref{eq:TBA-conformal}:
\begin{align}\label{eq:TBAExSol}
\CX_{\gamma_i}=\CX_{\pi(\gamma_i)}=e^{\frac{Z_{\gamma_i}}{\epsilon}}.
\end{align}
\end{theorem}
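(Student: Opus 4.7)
The strategy is direct substitution: plug the ansatz $\CX_\gamma(\epsilon) = e^{Z_\gamma/\epsilon}$ into the TBA system \eqref{eq:TBA-conformal} and show that the integral correction on the right-hand side vanishes identically for each node $\gamma = \gamma_i$. Because the equation is linear in $\gamma$ through $Z_\gamma$ and $\langle\gamma,\gamma'\rangle$, this suffices to establish \eqref{eq:TBAExSol} for all charges. To set up the cancellation, I first extend assumption~1 by linearity of $Z$: the identity $Z_{\pi(\gamma_i)} = Z_{\gamma_i}$ propagates to $Z_{\pi(\gamma)} = Z_\gamma$ for every $\gamma \in \Gamma$, so central charges are constant along $\pi$-orbits. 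Since $\pi$ is a symmetry of the BPS spectrum, the DT invariants $\Omega$, the signs $\sigma$, and the contours $\ell_{\gamma'}$ are likewise orbit-invariant, and the ansatz itself is $\pi$-invariant, which is consistent with the second equality in \eqref{eq:TBAExSol}.

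The next step is to reorganize the sum over positive charges $\gamma'>0$ as a sum over $\pi$-orbits. Under the ansatz the only piece of the summand that is not orbit-invariant is the Dirac pairing $\langle\gamma_i,\gamma'\rangle$, so the contribution of a single orbit $O$ factorizes into an orbit-invariant prefactor carrying $\Omega$, $\sigma$, and the $\epsilon'$-integral, multiplied by
\begin{equation*}
\sum_{\gamma'\in O}\langle\gamma_i,\gamma'\rangle \;=\; \Big\langle\gamma_i,\,\sum_{\gamma'\in O}\gamma'\Big\rangle.
\end{equation*}
For an orbit of size $d\mid N$ generated by some $\gamma^\star$, one has $\sum_{\gamma'\in O}\gamma' = (d/N)\sum_{k=0}^{N-1}\pi^k(\gamma^\star)$, so it suffices to show that $\sum_{k=0}^{N-1}\pi^k(\gamma^\star) \in \Gamma_f$ for every positive $\gamma^\star$.

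This is precisely where assumption~2 enters, promoted from the basis to arbitrary charges by linearity of the $\pi$-action on $\Gamma$: writing $\gamma^\star = \sum_j c_j\gamma_j$ with $c_j\in\IZ$, one has $\sum_{k=0}^{N-1}\pi^k(\gamma^\star) = \sum_j c_j\sum_{k=0}^{N-1}\pi^k(\gamma_j)\in\Gamma_f$. Since $\Gamma_f$ is by definition the radical of the Dirac pairing, $\langle\gamma_i,\Gamma_f\rangle = 0$, and each orbit contribution to the integral correction vanishes identically. The semiclassical ansatz therefore solves \eqref{eq:TBA-conformal} with no $\epsilon$-corrections.

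The main technical point in writing this up cleanly is the bookkeeping around the $\pi$-action on the positive cone: one needs to check that $\pi$ maps positive BPS charges to positive BPS charges (so that the sum over $\gamma'>0$ partitions cleanly into $\pi$-orbits) and to handle short orbits with $d<N$ without double counting, which motivates the factor $d/N$ above. Both points follow from the assumption that $\pi$ permutes the hypermultiplet basis $\{\gamma_i\}$ together with $\pi^N=\mathrm{id}$; no analytic estimate is needed since the cancellation is exact rather than perturbative in $\epsilon$.
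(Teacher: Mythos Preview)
Your proof is correct and follows essentially the same route as the paper's: both group the TBA sum by $\pi$-orbits and use assumption~2 (extended by linearity) to place the orbit sum in $\Gamma_f$, killing the Dirac pairing. The paper frames the argument as an order-by-order iteration in $\epsilon$ rather than a single direct substitution, and it glosses over short orbits in its disjoint-union decomposition $\CS = \CS_0\sqcup\dots\sqcup\pi^{N-1}(\CS_0)$; your treatment with the $d/N$ factor is slightly more careful on that point.
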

\begin{proof}
Let $\CS$ be the BPS spectrum in the chamber associated to the stability condition \eqref{eq:algstabthm}. By assumption 1 it is invariant under $\pi$, so that
\begin{equation}
\CS=\CS_0\sqcup\pi(\CS_0)\sqcup\dots\sqcup\pi^{N-1}(\CS_0), \qquad \CS_0\subset\CS.
\end{equation}
The TBA equations can then be written as
\begin{equation}
\begin{split}
\log \CX_\gamma & =\frac{Z_\gamma}{\epsilon}-\frac{\epsilon}{i\pi}\sum_{j=0}^{N-1}\sum_{\gamma'\in\pi^j(\CS_0)}\Omega(\gamma')\langle\gamma,\gamma'\rangle\CI_{\gamma'}  \\
& = \frac{Z_\gamma}{\epsilon}-\frac{\epsilon}{i\pi}\sum_{\gamma'\in\CS_0}\sum_{j=0}^{N-1}\Omega(\pi^j(\gamma'))\langle\gamma,\pi^j(\gamma')\rangle\CI_{\pi^j(\gamma')},
\end{split}
\end{equation}
where we defined
\begin{equation}
\CI_{\gamma'}:=\int_{\ell_{\gamma'}}
	\frac{d \epsilon'}{(\epsilon')^2 - (\epsilon)^2}
	\log(1-\sigma(\gamma') \CX_{\gamma'}(\epsilon')).
\end{equation}
Since $\pi$ is simply a permutation of the nodes leaving the quiver invariant, for any $\gamma\in\Gamma$ we have $\Omega(\gamma)=\Omega(\pi(\gamma))$, and $\sigma(\gamma)=\sigma(\pi(\gamma))$, and the integration contour in the TBA equation also satisfies $\ell_{\gamma_i}=\ell_{\pi(\gamma_i)}$.  

When solving the TBAs order by order in $\epsilon$, the variables $\CX_\gamma$ in the integral of the order $\epsilon^{n+1}$ are the solution of the TBAs at order $\epsilon^{n-1}$. The first correction will be 
\begin{equation}
\begin{split}
\log \CX_\gamma \simeq\frac{Z_\gamma}{\epsilon}-\frac{\epsilon}{i\pi}\sum_{\gamma'\in\CS_0}&\Omega(\gamma')\langle\gamma,\gamma'+\pi(\gamma')+\dots+\pi^{N-1}(\gamma')\rangle\\
&\times\int_{\ell_{\gamma'}}\frac{d\epsilon'}{(\epsilon')^2-(\epsilon)^2}\log\left(1-\sigma(\gamma')e^{Z_{\gamma'}/\epsilon'}\right) ,
\end{split}
\end{equation}

On the other hand, since $\gamma'+\pi(\gamma')+\dots+\pi^{N-1}(\gamma')\in\Gamma_f$, we have
\begin{equation}
\langle\gamma,\gamma'+\pi(\gamma')+\dots+\pi^{N-1}(\gamma')\rangle=0,
\end{equation}
so that
\begin{equation}
\log\CX_{\gamma}=\frac{Z_\gamma}{\epsilon}+O(\epsilon^3).
\end{equation}
The first correction vanishes, but the same argument applies order by order, so that in fact the corrections will vanish to all orders. Then
\begin{equation}
\CX_\gamma=\exp\left(\frac{Z_\gamma}{\epsilon} \right)
\end{equation}
is an exact solution to the system of TBAs.
\end{proof}

\begin{remark}
For the solution \eqref{eq:TBAExSol} to be physically meaningful, we need  that $\langle\gamma_i,\pi(\gamma_i)\rangle=0$, since we are aligning the corresponding central charges. If the pairing is nonzero, it will still be true that \eqref{eq:TBAExSol} provides an exact solution to the TBA equations, but we will be lying on a wall of marginal stability. We will see in Section \ref{sec:dP3AD} that even in this latter case, it is possible to deform away from the fine-tuned stability condition in a controlled way, and discover interesting new phases of the 5d theory.
\end{remark}
\begin{remark}
This theorem can also be directly rephrased in a purely algebro-geometric setting without reference to the TBA equations \footnote{Many thanks go to T. Bridgeland, who brought this fact to the author's attention.}. In broad terms, it would state that given a BPS structure in the sense of \cite{Bridgeland:2016nqw} with symmetry $\pi$, the stability condition \eqref{eq:algstabthm} yields the solution \eqref{eq:TBAExSol} to the BPS Riemann-Hilbert Problem. This is because under these assumptions the jump condition
\begin{equation}\label{eq:WallCrossing}
\CX_\gamma\mapsto\CX_\gamma\prod_{Z_{\gamma'}\in\ell_\gamma}\left(1+\sigma(\gamma)\CX_\gamma \right)^{\langle\gamma,\gamma'\rangle\Omega(\gamma)}
\end{equation}
is trivially solved by $\CX_\gamma=e^{\frac{Z_\gamma}{z}}$.
\end{remark}

\section{Algebraic solutions, decoupling and exact BPS spectra of local del Pezzos}\label{sec:degens}

We now briefly introduce $\CX$-cluster variables, that can be used to construct (nonautonomous) cluster integrable systems associated to toric $CY_3$ \cite{Bershtein2017,Bershtein:2018srt,Mizuno2020}. We will show that the notion of algebraic solutions of the Cluster Integrable System coincides with the semiclassically exact solution of the TBA in Theorem \ref{thm:ExSo}. Indeed, as was shown explicitly for local $\mathbb{P}^1\times\mathbb{P}^1$ in \cite{DelMonte2022a}, the TBA equations \eqref{eq:TBA-conformal} in a collimation chamber are equivalent to the discrete equation of the integrable system, so that the general solution of the TBAs will correspond to a transcendental solution, rather than an algebraic one (we will briefly review how the argument applies for local $dP_5$ at the end of section \ref{sec:dP5}). A possible intermediate behaviour is given by rational solutions, that we will construct explicitly.

To each node $i$ of the quiver $\CQ$ we associate a $\mathbb{C}^\times$-valued $\CX$-cluster variable $\CX_i:=\CX_{\gamma_i}$, so that for a generic $\gamma=\sum_{i}n_i\gamma_i\in\Gamma$ we have
\begin{equation}
\CX_\gamma:=\prod_{i}\CX_i^{n_i}.
\end{equation}
The $\CX$-cluster variety is constructed by patching local charts of $\CX$-cluster variables by birational transformations called mutations. Under a mutation $\mu_k$ at a node $k$ of the quiver, the quiver and BPS charges transform according to \eqref{eq:Bmutation}, \eqref{eq:gammamutation}, while the $\CX$-cluster variables transform as
\begin{align}\label{eq:Xmutation}
		\mu_k(\CX_j)= \begin{cases}
		\CX_j^{-1}, & j=k, \\
		\CX_j\left(1+\CX_k^{\text{sgn} B_{jk}}\right)^{B_{jk}} , & \text{ otherwise}.
	\end{cases}
\end{align}
Other transformations that have to be considered are permutations of the nodes of the quiver, and the inversion
\begin{align}
\iota:\quad B_{ij}\rightarrow- B_{ij},\quad \CX_i\rightarrow \CX_i^{-1}.
\end{align}
The adjacency matrix of the quiver defines a Poisson bracket on the $\CX$-cluster variety, for which the coordinates $\CX_i$ are log-canonical:
\begin{equation}\label{eq:ClusterPoisson}
\{\CX_i,\CX_j\}=B_{ij}\CX_i\CX_j.
\end{equation}
The Casimirs of this algebra are $\CX$-cluster variables associated to elements $\gamma\in\ker B:=\Gamma_f\subset\Gamma$, the flavour sublattice of $\Gamma$. For local del Pezzos, such sublattices are isomorphic to affine root lattices, according to the diagram \ref{Fig:Sakai}, so that we can write them in terms of cluster variables associated to the affine roots $\alpha_i$ (see Appendix \ref{App:WeylGroups}),
\begin{align}\label{eq:AffineRoots}
	a_i:=\CX_{\alpha_i}, && q:=\CX_{D0} ,
\end{align}
where the D0 brane charge is the null root of the affine root lattice. These are also called multiplicative root variables. In the following, we will consider cluster variables as solutions to the TBA equations \eqref{eq:TBA-conformal}: in terms of central charges/stability data, the multiplicative root variables are simply
\begin{equation}
a_i=e^{\frac{Z_{\alpha_i}}{\epsilon}},\qquad q=e^{\frac{Z_{D0}}{\epsilon}}=e^{\frac{2\pi}{R\epsilon}},
\end{equation}
since central charges corresponding to flavour charges do not receive $\epsilon$-corrections in the TBA equations \eqref{eq:TBA-conformal}.

Sequences of mutations, permutations and inversion that preserve the quiver (but that can act nontrivially on the $\CX$-cluster variables) are Poisson maps on the cluster variety. The group of such transformation is called the cluster modular group, and in our present case it contains (often being isomorphic to) the extended affine Weyl group of the corresponding algebra \cite{Bershtein2017,Mizuno2020}. Among such transformation, a special role is performed by affine translations, that are discrete time evolutions for a corresponding (nonautonomous) cluster integrable system, which in this case are given by q-Painlev\'e equations \footnote{These can be seen as a deformation of cluster integrable systems arising from dimer models on the torus \cite{Goncharov2013,Fock2014}. The undeformed cluster integrable system describes the Seiberg-Witten geometry of the 5d gauge theory \cite{Brini:2009nbd,Eager:2011dp}, while the deformation takes into account the stringy ($\epsilon$-) corrections, similarly as what happens in four dimensions when going from a Hitchin System to Painlev\'e equations \cite{Bonelli2016}.}

It is a discrete flow among the leaves of \eqref{eq:ClusterPoisson}: an affine translation will act on the simple roots as
\begin{equation}
\alpha_i\mapsto\alpha_i+n_i\gamma_{D0},
\end{equation}
so that the Casimirs get transformed according to 
\begin{equation}
a_i\mapsto q^{n_i}a_i.
\end{equation}
Once an affine translation is chosen as a discrete time evolution, the remaining ones act as symmetries of the system, its so-called B\"acklund transformations.
The importance of these flows for us will be two-fold: on the one hand, it was shown in \cite{Bonelli2020,DelMonte2021} that discrete time evolutions of q-Painlev\'e equations, when acting on the charges through the mutation rule \eqref{eq:gammamutation},
generate all the hypermultiplet BPS states outside the accumulation ray starting from the initial generators $\gamma_i$ associated to the nodes of the quiver. When acting on the $\CX$-cluster variables, they instead yield the q-difference equations of the cluster integrable system.
The remaining flows, i.e. the B\"acklund transformations, generate an infinite number of solutions to the TBA equations \eqref{eq:TBA-conformal} starting from the simple solution of Theorem \ref{thm:ExSo}.


\subsection{Local $dP_5$ and the algebraic solution to q-Painlev\'e VI}\label{sec:dP5}
We consider the BPS quiver associated to local $dP_5$\footnote{\label{fn:PseudodP}The quiver \ref{Fig:QuiverdP5} is also associated to ``pseudo local $dP_5$'', which is a toric geometry first introduced in \cite{Feng:2002fv}, constructed by blowing up $\mathbb{P}^2$ at non-generic points. The distinction between the two geometries enters in the superpotential associated to the quiver and in additional bidirectional arrows in the quiver, both of which however do not play a role in our considerations, see \cite{Hanany:2012hi,Beaujard:2020sgs}.}
in Figure \ref{Fig:QuiverdP5}. The low-energy gauge theory phase of this geometry is 5d $SU(2)$ Super Yang-Mills with four fundamental flavours, and the BPS flavour sublattice is
\begin{equation}
\Gamma_f\simeq Q\left(D_5^{(1)} \right),
\end{equation}
whose realization we recall in Appendix \ref{app:dP5}.
\begin{figure}[h]
\begin{center}
\includegraphics[width=.35\textwidth]{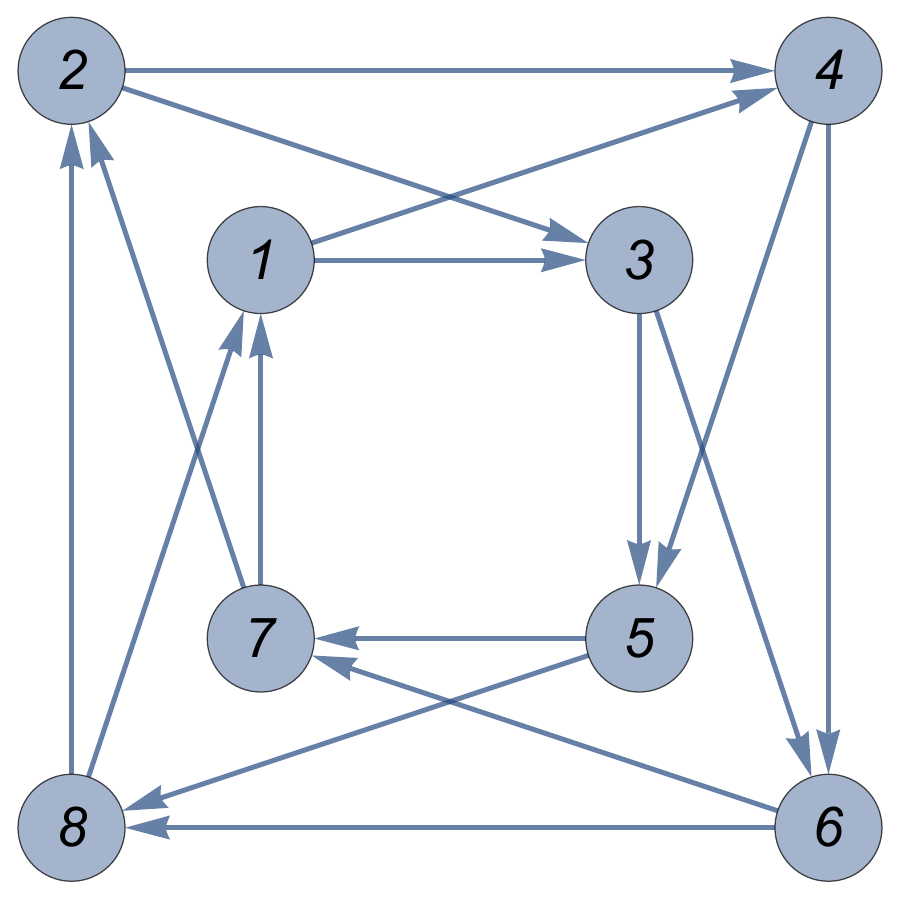}
\end{center}
\caption{BPS quiver for $\text{dP}_5$}
\label{Fig:QuiverdP5}
\end{figure}
Consider the stability condition
\begin{align}
\CC_1^{(alg)}(dP_5) :\qquad Z_1=Z_5=Z+\Lambda_1, \qquad Z_2=Z_6=Z+\Lambda_2, \nonumber\\
 Z_3=Z_7=\bar{Z}+\Lambda_3, \qquad Z_4=Z_8=\bar{Z}+\Lambda_4, \label{eq:CC1algdP5}
\end{align}
with 
\begin{equation}
\Lambda_i\in\mathbb{R},\quad \Lambda_1<\Lambda_2,\,\Lambda_4<\Lambda_3, \qquad 4\Re Z+\Lambda_1+\Lambda_2+\Lambda_3+\Lambda_4=\frac{\pi}{R}, \quad \Im Z>0
\end{equation}
and $\Lambda_i$ sufficiently small so that no wall-crossing happens from the case $\Lambda_1=\Lambda_2=\Lambda_3=\Lambda_4=0$, which we will refer to as the fine-tuned stratum $\CC_1^{(0)}$. In \cite{DelMonte2021} the BPS spectrum for this chamber was shown to be of the form\footnote{There is also another collimation chamber $\CC_2^{(alg)}$, obtained from \eqref{eq:CC1algdP5} through permutation $\pi=(1,3,5,7)(2,4,6,8)$ of the nodes. It corresponds to the algebraic solution of the second collimation chamber $\CC_2$ considered in \cite{DelMonte2021}. For concreteness we will consider only the chamber $\CC_1$, but everything holds for the other chamber as well, after permutation of the labels in the formulas.}:
\begin{equation}\label{eq:SpectrumdP5}
\begin{array}{|c|c|}
	\hline
	\gamma & \Omega(\gamma;y) \\
	\hline\hline
	\gamma_r + k v_1 & 1\\
	-\gamma_r + (k+1) v_1 & 1\\
	\gamma_s + k v_2 & 1\\
	-\gamma_s + (k+1) v_2 & 1\\
	\hline
	\gamma_a+\gamma_b + k \gamma_{D0} & 1\\
	-\gamma_a-\gamma_b + (k+1) \gamma_{D0} & 1\\
	v_1 + k \gamma_{D0} & y+y^{-1}\\
	-v_1 + (k+1) \gamma_{D0} & y+y^{-1}\\
	(k+1)\gamma_{D0} & y^3 + 6y+y^{-1}\\
	\hline
\end{array}
\end{equation}
where $k\geq  0$ and 
\be\label{eq:dP5-charges-domains}
\begin{split}
	& r\in \{1,2,3,4\},\quad s\in \{5,6,7,8\}  \\
	& (a,b)\in \{(1,3), (1,4), (2,3), (2,4), (5,7), (5,8), (6,7), (6,8)\} ,
	\\
	& v_1=\gamma_1+\gamma_2+\gamma_3+\gamma_4,\,\qquad v_2=\gamma_5+\gamma_6+\gamma_7+\gamma_8.
\end{split}
\ee
\begin{figure}
\begin{center}
\begin{tikzpicture}
\draw[fill=SecondBlue,thick] (-1,0) circle (.3);
\draw[fill=SecondBlue,thick] (1,0) circle (.3);
\draw[fill=SecondBlue,thick] (-3,0) circle (.3);
\draw[fill=SecondBlue,thick] (3,0) circle (.3);
\draw[fill=SecondBlue,thick] (-1,2) circle (.3);
\draw[fill=SecondBlue,thick] (1,2) circle (.3);
\draw[-,thick] (-0.7,0) to (0.7,0);
\draw[-,thick] (-2.7,0) to (-1.3,0);
\draw[-,thick] (2.7,0) to (1.3,0);
\draw[-,thick] (-1,0.3) to (-1,1.7);
\draw[-,thick] (1,0.3) to (1,1.7);
\draw[<->,thick] (-1.4,1.8) to (-2.8,0.4);
\draw[<->,thick] (1.4,1.8) to (2.8,0.4);
\node at (-3,0) {$\alpha_1$};
\node at (-1,0) {$\alpha_2$};
\node at (1,0) {$\alpha_3$};
\node at (3,0) {$\alpha_4$};
\node at (-1,2) {$\alpha_0$};
\node at (1,2) {$\alpha_5$};
\node at (-1.8,0.8) {$\pi^2$};
\node at (1.8,0.8) {$\pi^2$};
\end{tikzpicture}
\end{center}
\caption{$D_5^{(1)}$ Dynkin diagram and involution}
\label{Fig:D5Dynkin}
\end{figure}
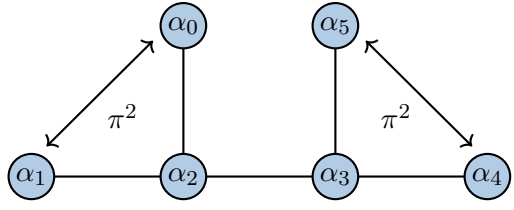
The BPS spectrum \eqref{eq:SpectrumdP5} is invariant under the involution $\pi^2=(1,5)(2,6)(3,7)(4,8) $, which can be identified with the Dynkin diagram automorphism in Figure \ref{Fig:D5Dynkin}. Finally, comparing with the basis of $\Gamma_f$ consisting of $D_5^{(1)}$ simple roots from equation \eqref{eq:D51roots}, we can see that
\begin{align}
\gamma_1+\pi^2(\gamma_1)=\gamma_5+\pi^2(\gamma_5)=\alpha_2, && \gamma_2+\pi^2(\gamma_2)=\gamma_6+\pi^2(\gamma_6)=\alpha_1+\alpha_1+\alpha_2, \nonumber\\
\gamma_3+\pi^2(\gamma_3)=\gamma_7+\pi^2(\gamma_7)=\alpha_3, && \gamma_4+\pi^2(\gamma_4)=\gamma_8+\pi^2(\gamma_8)= \alpha_3+\alpha_4+\alpha_5.
\end{align}
Thus, the assumptions of theorem \ref{thm:ExSo} are verified, and
\begin{align}
\CX_{\gamma_1}=\CX_{\gamma_5}=e^{\frac{Z+\Lambda_1}{\epsilon}}, && \CX_{\gamma_2}=\CX_{\gamma_6}=e^{\frac{Z+\Lambda_2}{\epsilon}}, \nonumber\\
 \CX_{\gamma_3}=\CX_{\gamma_7}=e^{\frac{\bar{Z}+\Lambda_3}{\epsilon}}, && \CX_{\gamma_4}=\CX_{\gamma_8}=e^{\frac{\bar{Z}+\Lambda_4}{\epsilon}}\label{eq:dP5ExSol}
\end{align}
is an exact solution of the TBA equations, as can also be checked by explicit computation.
We will now introduce the cluster Integrable System associated to this geometry, and use it to write down an infinite class of exact solutions starting from \eqref{eq:dP5ExSol}.
\subsubsection*{The q-Painlev\'e VI Cluster Integrable System}
The Cluster Integrable System for this case is the q-Painlev\'e VI equation (symmetry type $D_5^{(1)}$ in Sakai's classification). The multiplicative root variables (Casimirs) are
\begin{align}
	a_0=\CX_2\CX_1^{-1}, && a_1=\CX_6\CX_5^{-1} && a_2=\CX_1\CX_5, \nonumber\\
	a_3=\CX_3\CX_7, && a_4=\CX_4\CX_3^{-1}, && a_5=\CX_8\CX_7^{-1} \label{eq:qPVICas},
\end{align}
with
\begin{equation}
q=\prod_{i=1}^8\CX_i=a_0a_1a_2^2a_3^2a_4a_5.
\end{equation}
The q-Painlev\'e dynamics is described by two log-canonical variables
\begin{align}\label{eq:FGqPVI}
	F=\left(\frac{\CX_7\CX_8}{\CX_3\CX_4} \right)^{\frac{1}{4}}=\left(\frac{a_5}{a_3^2a_4}\right)^{\frac{1}{4}}\CX_7, && G=\left(\frac{\CX_5\CX_6}{\CX_1\CX_2} \right)^{\frac{1}{4}}=\left(\frac{a_1}{a_0a_2^2}\right)^{\frac{1}{4}}\CX_5, && \{F,G\}=1,
\end{align}
evolving through the discrete time evolution
\begin{equation}\label{eq:T1qP6}
\begin{split}
	T_1 & =s_3s_4s_5s_3s_2s_1s_0s_2(2,6)(1,5)(4,8)(7,3) \\
	& = (1,2)(3,4)(5,6)(7,8)\mu_4\mu_8\mu_3\mu_7\mu_2\mu_6\mu_1\mu_5 ,
\end{split}
\end{equation}
which is an affine translation on the root lattice $Q(D_5^{(1)})$. Here $s_i$ denotes the simple reflection along the root $\alpha_i$, and is given in Appendix \ref{App:WeylGroups}. It acts on the (multiplicative) affine roots as
\begin{align}
	T_1(\vec{\alpha})=\left(\alpha_0,\alpha_1,\alpha_2+\delta,\alpha_3-\delta,\alpha_4,\alpha_5 \right), && T_1(\vec{a})=\left(a_0,a_1,qa_2,q^{-1}a_3^{-1},a_4,a_5 \right),
\end{align}
on the BPS charges as
\begin{equation}\label{eq:T1gammadP5}
T_1^n(\vec{\gamma})=\left( \begin{array}{c}
\gamma_1+n\,v_1 \\
\gamma_2+n\,v_1 \\
\gamma_3-n\,v_1 \\
\gamma_4-n\,v_1 \\
\gamma_5+n\,v_2 \\
\gamma_6+n\,v_2 \\
\gamma_7-n\,v_2 \\
\gamma_8-n\,v_2
\end{array} \right),
\end{equation}
and on the X-cluster variables as 
\begin{align}
\begin{cases}
	\overline{\CX_3}=\frac{\left(1+\CX_1^{-1} \right)\left(1+\CX_2^{-1} \right)}{\CX_4\left(1+\CX_5 \right)\left(1+\CX_6 \right)}, \\ \\
	\overline{\CX_4}=\frac{\left(1+\CX_1^{-1} \right)\left(1+\CX_2^{-1} \right)}{\CX_3\left(1+\CX_5 \right)\left(1+\CX_6 \right)}, \\ \\
	\overline{\CX_7}=\frac{\left(1+\CX_5^{-1} \right)\left(1+\CX_6^{-1} \right)}{\CX_8\left(1+\CX_1 \right)\left(1+\CX_2 \right)}, \\ \\
	\overline{\CX_8}=\frac{\left(1+\CX_5^{-1} \right)\left(1+\CX_6^{-1} \right)}{\CX_7\left(1+\CX_1 \right)\left(1+\CX_2 \right)},
\end{cases} &&
\begin{cases}
	\underline{\CX_1}=\frac{\left(1+\CX_7^{-1} \right)\left(1+\CX_8^{-1} \right)}{\CX_2\left(1+\CX_3 \right)\left(1+\CX_4 \right)}, \\ \\
	\underline{\CX_2}=\frac{\left(1+\CX_7^{-1} \right)\left(1+\CX_8^{-1} \right)}{\CX_1\left(1+\CX_3\right)\left(1+\CX_4 \right)}, \\ \\
	\underline{\CX_5}=\frac{\left(1+\CX_3^{-1} \right)\left(1+\CX_4^{-1} \right)}{\CX_6\left(1+\CX_7\right)\left(1+\CX_8 \right)}, \\ \\
	\underline{\CX_6}=\frac{\left(1+\CX_3^{-1} \right)\left(1+\CX_4^{-1} \right)}{\CX_5\left(1+\CX_7 \right)\left(1+\CX_8 \right)},
\end{cases}
\end{align}
where an overline represents the action of $T_1$, while the underline represents the action of $T_1^{-1}$. One can explicitly verify \cite{Bershtein2017} that $F,G$ defined by \eqref{eq:FGqPVI} satisfy the sixth q-Painlev\'e equation (as written in \cite{tsuda2014uc} after the replacement $b_i\mapsto b_i^{-\frac{1}{4}}$)
\begin{align}\label{eq:qPVIeq}
	\overline{F}\,F=\frac{1}{b_5b_6}\frac{(1+b_6G)(1+b_5G)}{(1+b_7G)(1+b_8G)}, &&  \underline{G}\,G=\frac{1}{b_1b_2}\frac{(1+b_1F)(1+b_2F)}{(1+b_3F)(1+b_4F)},
\end{align}
where
\begin{align}
	b_1^4=\frac{a_4}{a_3^2a_5}, && b_2^4=\frac{1}{a_3^2a_4^3a_5}, && b_3^4=\frac{a_3^2a_4}{a_5}, && b_4=a_3^2a_4a_5^3, \\
	b_5^4=\frac{a_0a_2^2}{a_1}, && b_6=a_0a_1^3a_2^2, && b_7^4=\frac{a_0}{a_1a_2^2}, && b_8^4=\frac{1}{a_0^3a_1a_2^2}.
\end{align}
Note that the q-Painlev\'e VI time evolution generates all the states in the BPS spectrum \eqref{eq:SpectrumdP5} whose central charges lie outside the real axis, since it produces the tilting of the positive half-plane associated to the stability condition \eqref{eq:CC1algdP5} (in fact, also to the more general stability condition \eqref{eq:CC1defdP5} that we will introduce below).

\subsubsection*{Algebraic and rational solutions of q-Painlev\'e VI and exact solutions to the TBA}
Algebraic solutions are typically characterized by their invariance under certain B\"acklund transformations known as foldings, implementing appropriate Dynkin diagram automorphisms on the cluster variables (see e.g. \cite{kajiwara2017geometric,Bershtein:2021gkq} and references therein), and they provide the integrable system counterpart of the exact solutions to the TBAs of Theorem \ref{thm:ExSo}.
In particular, if we impose invariance under the involution $\pi^2$, 
we obtain the following algebraic solution of q-Painlev\'e VI
\begin{align}\label{eq:dP5algsol}
	\vec{\CX}^{(dP_5)}_{alg}\left(a_0,a_2,a_3,a_4\right)=\left(\begin{array}{c}
		a_2^{1/2} \\
		a_0a_2^{1/2} \\
		a_3^{1/2} \\
		a_4a_3^{1/2} \\
		a_2^{1/2} \\
		a_0a_2^{1/2} \\
		a_3^{1/2} \\
		a_4a_3^{1/2}	
	\end{array} \right), && q=(a_0a_2a_3a_4)^2,
\end{align}
which coincides with \eqref{eq:dP5ExSol} after appropriate identification between the Casimirs $a_0,a_2,a_3,a_4,q$ and the stability data $Z,\Lambda_1,\Lambda_2,\Lambda_3,\Lambda_4$:
\begin{align}\label{eq:dP5algsol}
  \vec{\CX}^{(dP_5)}_{alg}\left(Z,\Lambda_i\right)=\left(\begin{array}{c}
    e^{\frac{1}{\epsilon}(Z+\Lambda_1)} \\
    e^{\frac{1}{\epsilon}(Z+\Lambda_2)}  \\
    e^{\frac{1}{\epsilon}(Z+\Lambda_3)}  \\
    e^{\frac{1}{\epsilon}(Z+\Lambda_4)}  \\
    e^{\frac{1}{\epsilon}(Z+\Lambda_1)}  \\
    e^{\frac{1}{\epsilon}(Z+\Lambda_2)}  \\
    e^{\frac{1}{\epsilon}(Z+\Lambda_3)}  \\
    e^{\frac{1}{\epsilon}(Z+\Lambda_4)}  
  \end{array} \right), && 4\Re Z+\sum \Lambda_i=\frac{\pi}{R}.
\end{align}
From the algebraic solution \eqref{eq:dP5algsol} it is possible to construct an infinite number of rational solutions by applying B\"acklund transformations, and in particular other affine translations realized as elements of the cluster modular group\footnote{See section 3.2 of \cite{kajiwara2017geometric} for an algorithmic procedure to construct any affine translation as a sequence of the simple reflections \eqref{eq:dP5refl}.}.
The fact that these are all solutions to the q-Painlev\'e equation \eqref{eq:qPVIeq} follows from the commutativity of affine translations: for any translation $T$, one has
\begin{equation}
 T_1T\vec{\CX}_{alg}^{(dP_5)}(a_0,\dots,a_5)=TT_1\vec{\CX}_{alg}^{(dP_5)}(a_0,\dots,a_5)=T\vec{\CX}_{alg}^{(dP_5)}(a_0,a_1,qa_2,q^{-1}a_3,a_4,a_5).
\end{equation}
From the point of view of the BPS Riemann-Hilbert problem, this gives closed-form exact solutions for an infinite family of stability conditions related to each other by affine translations, in terms of the stability data of the original stability condition $\CC_1^{(alg)}$ in \eqref{eq:CC1algdP5}.

\textbf{Example: }
Consider the affine translation
\begin{align}
T_2:=\pi^2s_0s_2s_3s_5s_4s_3s_2s_0, && T_2(\vec{\alpha})=\left(\alpha_0+\delta,\alpha_1-\delta,\alpha_2,\alpha_3,\alpha_4,\alpha_5 \right),
\end{align}
or in terms of multiplicative root variables
\begin{equation}\label{eq:T2Cas}
T_2(\vec{a})=(qa_0,q^{-1}a_1,a_2,a_3,a_4,a_5).
\end{equation}
If we apply this to the algebraic solution \eqref{eq:dP5algsol}, we obtain the following rational solution of q-Painlev\'e VI
{\small\begin{equation}
\begin{split}
\CX_{5,rat}^{(dP_5)}:=T_2(\CX_{5,alg}^{(dP_5)})=\frac{a_2^3 a_3^2 \left(a_2 \left(\left(a_2+1\right) a_3+1\right) a_0^2+1\right) a_4^2 \left(a_2 \left(\left(a_2+1\right) a_3 a_4^2+1\right) a_0^2+1\right)}{\left(a_0^2 a_3 a_2^2+\left(a_3+1\right) a_2+1\right) \left(a_0^2 a_2^2 a_3 a_4^2+a_2 \left(a_3 a_4^2+1\right)+1\right)},
\end{split}
\end{equation}
\begin{equation}
\begin{split}
&\CX_{5,rat}^{(dP_5)}:=T_2(\CX_{5,alg}^{(dP_5)})\\
&=\frac{a_3 \left(a_2 \left(a_3 \left(a_2 a_3^2 \left(\left(a_2 a_0^2+1\right) a_3 a_2+a_2+1\right) a_4^4+\left(a_2+1\right) \left(a_3+1\right) \left(a_2 a_3+1\right) a_4^2+a_2+1\right)+1\right) a_0^2+1\right)}{a_2 \left(a_3 \left(a_0^2 a_2 a_3^2 \left(a_2 \left(\left(a_2+1\right) a_3+1\right) a_0^2+1\right) a_4^4+\left(a_2 a_0^2+1\right) \left(a_3+1\right) \left(a_2 a_3 a_0^2+1\right) a_4^2+a_0^2 a_2+1\right)+1\right)+1},
\end{split}
\end{equation}}
the other cluster variables being obtained by a combination of \eqref{eq:T2Cas} and \eqref{eq:qPVICas}. This is an exact solution for the BPS Riemann-Hilbert problem corresponding to the transformed stability condition $\tilde{Z}_i:=T_2(Z_i)$, with
\begin{align}
T_2(\CC_1^{(alg)}): && T_2(Z_1)=Z_1-2\Re Z_1-\Lambda_3-\Lambda_4,\quad T_2(Z_2)=Z_2+2\Re Z_2+\Lambda_3+\Lambda_4, \nonumber \\
&& T_2(Z_3)=Z_3,\quad T_2(Z_4)=Z_4, \quad T_2(Z_5)=Z_1+2\Re Z_1+\Lambda_3+\Lambda_4, \nonumber\\
&& T_2(Z_6)=Z_6-2\Re Z_6-\Lambda_4-\Lambda_4, \quad T_2(Z_7)=Z_7,\quad T_2(Z_8)=Z_8,
\end{align}
with $Z_i$ given by \eqref{eq:CC1algdP5}. This stability condition does not satisfy the assumptions of Theorem \ref{thm:ExSo}, and so its solution receives all-order corrections in $\epsilon$. Nontheless, an exact solution can be given starting from the algebraic one.
\begin{remark}
	The rational solutions have a nice characterization from a resurgent standpoint: recalling that $a_i=\exp(Z_{\alpha_i}/\epsilon)$, they are trans-series with no perturbative corrections around all the infinite saddles. The algebraic solution is the one consisting of only one trans-monomial. To the author's knowledge, from the point of view of the BPS spectral problem the existence of such solutions has never been pointed out before.
\end{remark}

\subsubsection*{q-Painlev\'e VI equation and TBA equation}
Up to now we used the fact that the algebraic solution satisfies the assumptions of Theorem \ref{thm:ExSo} to construct relatively simple solutions (algebraic and rational in the multiplicative root variables) of the BPS Riemann-Hilbert Problem. Before moving forward, let us stress that it is possible show that
any solution to \eqref{eq:TBA-conformal} corresponding to the spectrum \eqref{eq:SpectrumdP5} must also solve the q-Painlev\'e equation \eqref{eq:qPVIeq}. This is analogous to the perhaps more familiar statement that solutions of the TBA equations for finite chambers can be reformulated as a Y-system, such as those appearing in the study of 2d integrable QFTs \cite{Zamolodchikov:1991et,Ravanini:1992fi,Kuniba:1993cn,Cecotti:2014zga}.

We will only sketch the derivation, as the argument is simply an adaptation of the one used in \cite{DelMonte2022a} for local $\mathbb{P}^1\times\mathbb{P}^1$: consider the following deformation of the family of stability conditions $\CC_1^{(alg)}$\footnote{This is only one neighborhood of $\CC_1^{(alg)}$ that is particularly convenient for the present discussion, but more general ones can be considered within the same chamber.}:
\begin{align}
\CC_1^{(def)}(dP_5) : \qquad Z_1=Z+\Lambda_1, \quad Z_2=Z+\Lambda_2, \quad Z_3=\bar{Z}+\Lambda_3, \quad Z_4=\bar{Z}+\Lambda_4, \nonumber\\
 Z_5=Z+\Lambda_1+\Lambda_5, \quad Z_6=Z+\Lambda_2+\Lambda_6, \quad Z_7=\bar{Z}+\Lambda_3+\Lambda_7, \quad Z_8=\bar{Z}+\Lambda_4+\Lambda_8,  \label{eq:CC1defdP5}
\end{align}
with
\begin{gather}
\Lambda_i\in\mathbb{R},\qquad\max(\Lambda_1,\,\Lambda_1+\Lambda_5)<\min(\Lambda_2,\,\Lambda_2+\Lambda_6),\quad\max(\Lambda_4,\Lambda_4+\Lambda_8)<\min(\Lambda_3,\Lambda_3+\Lambda_7), \nonumber \\{}
4\Re Z+\Lambda_1+\Lambda_2+\Lambda_3+\Lambda_4=\frac{\pi}{R},\qquad \Lambda_5+\Lambda_6+\Lambda_7+\Lambda_8=0,\qquad \Im Z>0,
\end{gather}
and perform the following homotopy in the space of stability conditions:
\begin{align}
Z_{\gamma_i}(s)=(1-s)Z_{\gamma_i}+s\widetilde{Z}_{\gamma_i}.
\end{align}
Here
\begin{align}
\widetilde{Z}_{\gamma_{1,5}}=Z_{\gamma_{1,5}}+Z_{v_1}, \quad \widetilde{Z}_{\gamma_{2,6}}=Z_{\gamma_{2,6}}+Z_{v_2}, \\
\widetilde{Z}_{\gamma_{3,7}}=Z_{\gamma_{3,7}}-Z_{v_2}, \quad \widetilde{Z}_{\gamma_{4,8}}=Z_{\gamma_{4,8}}-Z_{v_1},
\end{align}
so that we are shifting the towers of central charges with positive imaginary part to the right, and the towers with negative imaginary part to the left. Along this path in the moduli space of stability conditions, the integration contours of the TBA rotate, and the phase of $\epsilon$ is crossed first by the following paths, in sequence: $\ell_{\gamma_1}$, $\ell_{\gamma_5}$, $\ell_{\gamma_2} $, $\ell_{\gamma_6}$, $\ell_{\gamma_5+\gamma_6+\gamma_8}$,  $\ell_{\gamma_1+\gamma_2+\gamma_4}$, $\ell_{\gamma_1+\gamma_2+\gamma_3}$, $\ell_{\gamma_5+\gamma_6+\gamma_7}$. This has the effect of the following sequence of mutations on the solution to the TBA equations:
\begin{equation}
\textbf{m}_1=\mu_4\mu_8\mu_3\mu_7\mu_2\mu_6\mu_1\mu_5,
\end{equation}
which is the same (up to a permutation that only exchanges labels of mutually local central charges and does not affect the result) as the sequence of mutations entering in the affine translation \eqref{eq:T1qP6} defining the q-Painlev\'e VI time evolution. Since the q-Painlev\'e VI equation follows only from the mutation rules, it follows that solutions of the GMN TBA equation \eqref{eq:TBA-conformal} in the collimation chamber $\CC_1^{(def)}$ are also solutions of the q-Painlev\'e VI equation \eqref{eq:qPVIeq}. The explicit expression for the general solution in terms of dual Nekrasov partition functions was obtained in \cite{Jimbo:2017}, so that one has an explicit formula for the solution of the TBA equation \eqref{eq:TBA-conformal} in terms of Nekrasov functions. We will not go into a detailed description of the general solution, which can be studied applying the methods of \cite{DelMonte2022a} to the solution of \cite{Jimbo:2017}. 

\begin{remark}
	As already noted at the beginning of this section, the q-Painlev\'e equation \eqref{eq:qPVIeq} can be regarded as the analogue of a Y-system for the TBA equations \eqref{eq:TBA-conformal}. Interestingly, the q-Painlev\'e tau functions, which can be written as dual 5d gauge theory partition functions \cite{Bershtein2017,Bonelli2017,Jimbo:2017,Bershtein:2018srt,Bershtein2018,Matsuhira2018}, are related to the $\CX$-cluster variables by
	\begin{equation}
		\CX_i=y_i\prod_{j=1}^{|\CQ|}\tau_j^{B_{ji}},
	\end{equation}
	where $y_i$ are cluster algebra coefficients. This expression seems to point out that the q-Painlev\'e tau functions should be the analogue of Q-functions for the TBA equation \cite{kirillov1990representations,kedem2008q,francesco2010q} .
\end{remark}

\subsection{Local $dP_4$}

Our strategy to study local $dP_4$ will be to decouple appropriate nodes of the BPS quiver \ref{Fig:QuiverdP5} to obtain the desired $dP_4$ quiver (see model 8a in \cite{Hanany:2012hi}, with the same remark about del Pezzos and pseudo del Pezzos as in footnote \ref{fn:PseudodP}). This procedure is called "Higgsing" in the physics literature \cite{Closset:2019juk}, and it was mostly studied in the context of brane tilings/dimer models associated to toric geometries \cite{Franco:2005rj,Davey:2009bp,Hanany:2012hi,Franco:2017jeo}. In our present case, it corresponds to sending to infinity the mass of a hypermultiplet in the low-energy $SU(2)$ $N_f=4$ gauge theory phase of local $dP_4$, degenerating to the $SU(2)$ $N_f=3$ low-energy gauge theory phase of local $dP_4$.

We start from the collimation chamber $\CC_1$ of $dP_5$ and send $Z_5,Z_7\rightarrow\infty$ while keeping $Z_5+Z_7$ finite. This can be realized by sending $\Lambda_5\rightarrow-\infty$, $\Lambda_7\rightarrow+\infty$ in \eqref{eq:CC1defdP5}while keeping $\tilde{\Lambda}_1:=\Lambda_5+\Lambda_7$ finite.
Graphically, the nodes $\gamma_5,\,\gamma_7$ of the quiver merge, and the number of arrows from the $i$-th node of the quiver to the new node $\gamma_5+\gamma_7$ are equal to $B_{i5}+B_{i7}$. 

Before detailing the result of this degeneration procedure, let us remark that in principle when taking this limit an infinite number of wall-crossings happen, possibly producing new wild states. However, as argued in \cite{Closset:2019juk} on the basis of quiver representation theory, in terms of which the limit amounts to imposing that the arrow between the nodes 5 and 7 is an isomorphism, these states will be unstable in the limit, so that they will not contribute to the BPS spectrum. In practice, the argument from \cite{Closset:2019juk} is corroborated by the fact that the BPS spectrum computed by degeneration of \eqref{eq:SpectrumdP5} coincides with the one computed by using the mutation method for the limiting stability condition \eqref{eq:CC1defdP4} below.
After relabeling 
\begin{align}
\gamma_5^{(dP_4)}:=\gamma_6^{(dP_5)}, && \gamma_6^{(dP_4)}:=\gamma_5^{(dP_5)}+\gamma_7^{(dP_5)}, && \gamma_7^{(dP_4)}:=\gamma_8^{(dP_5)},
\end{align}
the quiver takes the form in Figure \ref{Fig:QuiverdP4}, which is a BPS quiver associated to local $dP_4$ and to the q-Painlev\'e equation of symmetry type $A_4^{(1)}$. The stability condition \eqref{eq:CC1defdP5} becomes
\begin{align}
\CC_1^{(def)}(dP_4) : \qquad Z_1=Z+\Lambda_1, \quad Z_2=Z+\Lambda_2, \quad Z_3=\bar{Z}+\Lambda_3, \quad Z_4=\bar{Z}+\Lambda_4, \nonumber\\
Z_5=Z+\Lambda_2+\Lambda_6 , \quad Z_6=2\Re Z+\Lambda_1+\Lambda_3+\tilde{\Lambda}_1, \quad Z_7=\bar{Z}+\Lambda_4+\Lambda_8,  \label{eq:CC1defdP4}
\end{align}
with
\begin{gather}
\Lambda_i\in\mathbb{R},\qquad\Lambda_1<\min(\Lambda_2,\,\Lambda_2+\Lambda_6),\quad\max(\Lambda_4,\Lambda_4+\Lambda_8)<\Lambda_3 \nonumber \\{}
4\Re Z+\Lambda_1+\Lambda_2+\Lambda_3+\Lambda_4=\frac{\pi}{R},\qquad \tilde{\Lambda}_1+\Lambda_6+\Lambda_8=0.
\end{gather}
\begin{figure}[h]
\begin{center}
\includegraphics[width=.35\textwidth]{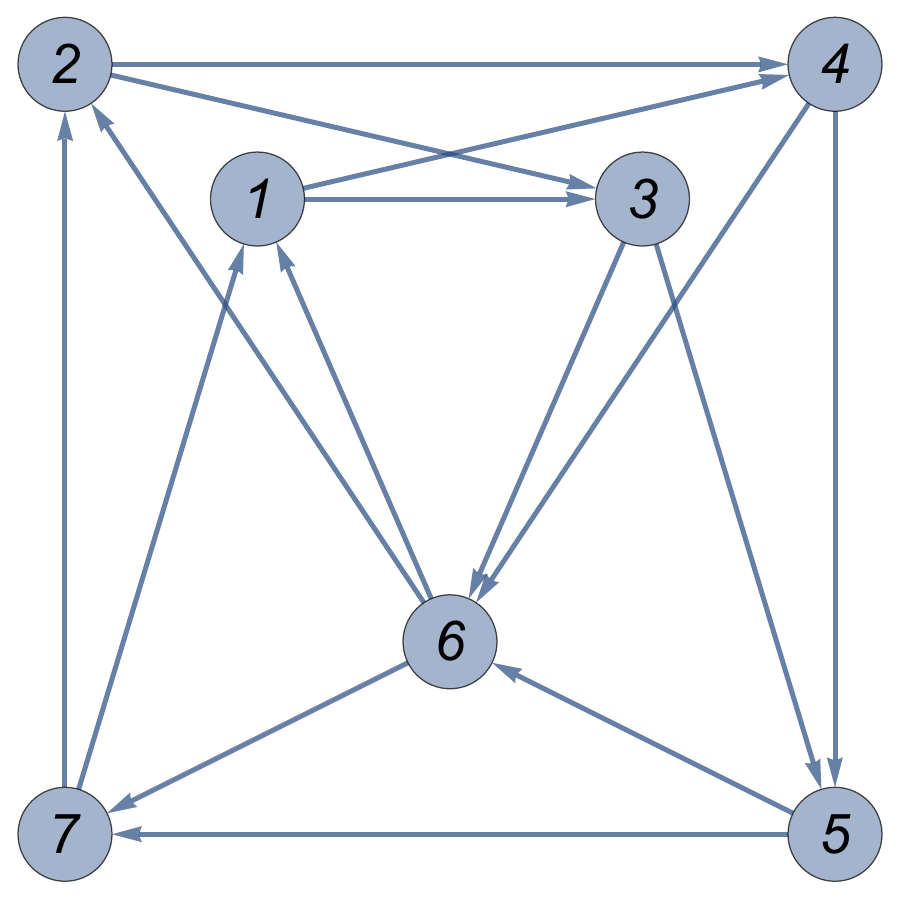}
\end{center}
\caption{BPS quiver for $\text{dP}_4$}
\label{Fig:QuiverdP4}
\end{figure}
Indeed, after the limit, the rank of the flavour sublattice is reduced from 6 to 5, and we can choose the following basis:
\begin{equation}\label{eq:Roots5to4}
\alpha_0^{(dP_4)}:=\alpha_2^{(dP_5)}+\alpha_3^{(dP_5)}, \quad\alpha_1^{(dP_4)}:=\alpha_0^{(dP_5)}, \quad \alpha_2^{(dP_4)}=\alpha_2^{(dP_5)}+\alpha_1^{(dP_5)}, \nonumber
\end{equation}
\begin{equation}
\alpha_3^{(dP_4)}:=\alpha_3^{(dP_5)}+\alpha_5^{(dP_5)}, \qquad \alpha_4^{(dP_4)}=\alpha_4^{(dP_5)},
\end{equation}
which in terms of the BPS charges associated to the quiver $\ref{Fig:QuiverdP4}$ reads (omitting the superscript $dP_4$ when not necessary)
\begin{equation}
\alpha_0=\gamma_1+\gamma_3+\gamma_6, \quad \alpha_1=\gamma_2-\gamma_1, \quad \alpha_2=\gamma_1+\gamma_5, \nonumber
\end{equation}
\begin{equation}
\alpha_3=\gamma_3+\gamma_7, \qquad \alpha_4=\gamma_4-\gamma_3.
\end{equation}
Using the relation \eqref{eq:Roots5to4}, together with the $D_5^{(1)}$ Cartan matrix \eqref{eq:CartanD5}, one finds that the intersection pairing between the basis $\alpha_j$ of $\Gamma_f$ for local $dP_4$ coincides with the Cartan matrix of $A_4^{(1)}$ \eqref{eq:CartanA4}, as expected from the degeneration diagram \ref{Fig:Sakai}. 
The limit can be straightforwardly applied to the BPS spectrum \eqref{eq:SpectrumdP5} as well, by keeping only states from \eqref{eq:SpectrumdP5} whose central charges remain finite: 
\begin{equation}\label{eq:SpectrumdP4}
\begin{array}{|c|c|}
	\hline
	\gamma & \Omega(\gamma;y) \\
	\hline\hline
	\gamma_r + k v_1 & 1\\
	-\gamma_r + (k+1) v_1 & 1\\
	\gamma_s + k v_2 & 1\\
	-\gamma_s + (k+1) v_2 & 1\\
	\hline
	\gamma_a+\gamma_b + k \gamma_{D0} & 1\\
	-\gamma_a-\gamma_b + (k+1) \gamma_{D0} & 1\\
	\gamma_6+k\gamma_{D0} & 1 \\
	-\gamma_6+(k+1)\gamma_{D0} & 1 \\
	v_1 + k \gamma_{D0} & y+y^{-1}\\
	-v_1 + (k+1) \gamma_{D0} & y+y^{-1}\\
	(k+1)\gamma_{D0} & y^3 + 6y+y^{-1}\\
	\hline
\end{array}
\end{equation}
where $k\geq  0$ and 
\be\label{eq:dP5-charges-domains}
\begin{split}
	& r\in \{1,2,3,4\},\quad s\in \{5,7\}  \\
	& (a,b)\in \{(1,3), (1,4), (2,3), (2,4), (5,7)\} ,
	\\
	& v_1=\gamma_1+\gamma_2+\gamma_3+\gamma_4,\,\qquad v_2=\gamma_5+\gamma_6+\gamma_7.
\end{split}
\ee
In fact, one can also find this spectrum by directly by applying the mutation method to the stability condition \eqref{eq:CC1defdP4}. 
On the new basis of BPS charges, the discrete time evolution \eqref{eq:T1gammadP5} degenerates to
\begin{equation}\label{eq:T1gammadP4}
T_1^n(\vec{\gamma})=\left( \begin{array}{c}
\gamma_1+n\,v_1 \\
\gamma_2+n\,v_1 \\
\gamma_3-n\,v_1 \\
\gamma_4-n\,v_1 \\
\gamma_5+n\,v_2 \\
\gamma_6 \\
\gamma_7-n\,v_2 
\end{array} \right),
\end{equation}
which is the following affine translation on the (multiplicative) root variables
\begin{align}
T_1(\vec{\alpha})=\left(\alpha_0,\,\alpha_1,\,\alpha_2+\delta,\,\alpha_3-\delta,\,\alpha_4 \right), && T_1(\vec{a})=\left(a_0,\,a_1,\,q\,a_2,\,q^{-1}a_3,\,a_4 \right),
\end{align}
and can be realized in terms of mutation as
\begin{equation}
T_1=(1,2)(3,4)(5,6,7)\mu_6\mu_3\mu_4\mu_5\mu_2\mu_1.
\end{equation}
This translation also coincides with the time evolution of the corresponding q-Painlev\'e V equation (symmetry type $A_4^{(1)}$) \cite{tsuda2006tau,Bershtein2017}.
The integrable system admits no folding \cite{Bershtein:2021gkq}, and correspondigly we do not have a semiclassically exact solution to the TBAs. Instead, solutions of the TBA equation \eqref{eq:TBA-conformal} in the collimation chamber \eqref{eq:SpectrumdP4} will be described by nontrivial solutions of the corresponding q-Painlev\'e equation, by an analogous argument to the one at the end of the previous section.

\subsection{Local $dP_3$}

To decouple one more flavour, we follow the same procedure as before and take $Z_1,Z_3\rightarrow\infty$ in the stability condition \eqref{eq:CC1defdP4}, while keeping $Z_1+Z_3$ finite, i.e. take $\Lambda_1\rightarrow-\infty$, $\Lambda_3\rightarrow+\infty$ while fixing $\tilde{\Lambda}_2:=\Lambda_1+\Lambda_3 $ in \eqref{eq:CC1defdP4}.
After relabeling
\begin{align}
\gamma_2^{(dP_3)}:=\gamma_1^{(dP_4)}+\gamma_3^{(dP_4)}, && \gamma_1^{(dP_3)}:=\gamma_2^{(dP_4)} , && \gamma_3^{(dP_3)}:=\gamma_4^{(dP_4)} \nonumber\\
\gamma_4^{(dP_3)}:=\gamma_5^{(dP_4)}, && \gamma_5^{(dP_3)}:=\gamma_6^{(dP_4)}, && \gamma_6^{(dP_3)}:=\gamma_7^{(dP_4)},
\end{align}
the resulting quiver for local $dP_3$ is in Figure \ref{Fig:QuiverdP3},
\begin{figure}[b]
\begin{center}
\includegraphics[width=.35\textwidth]{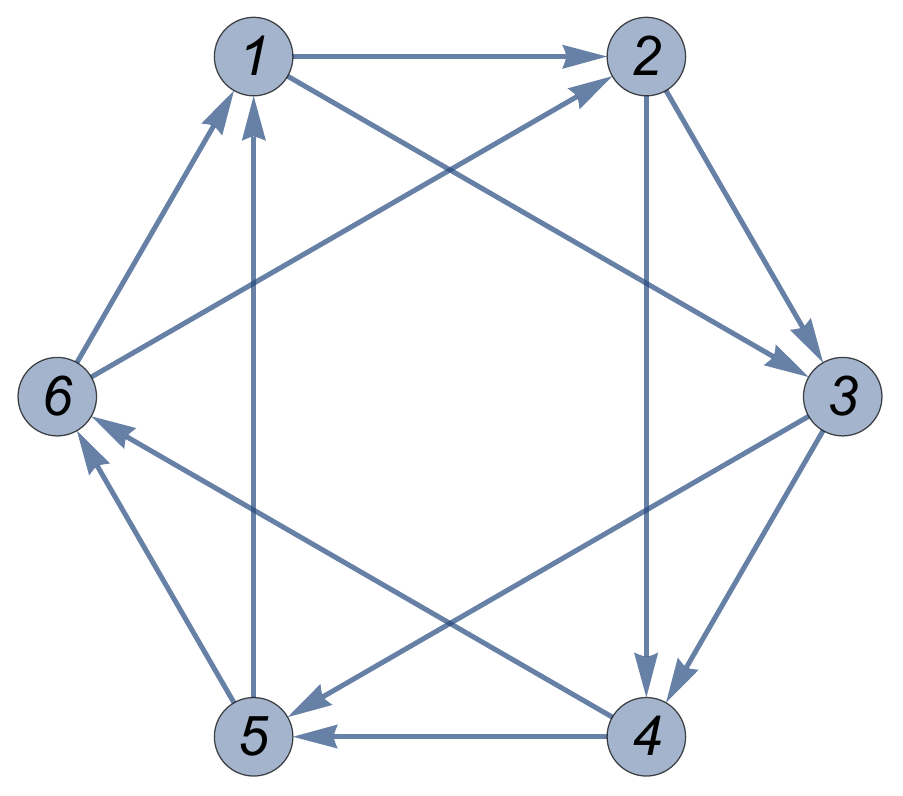}
\end{center}
\caption{BPS quiver for $\text{dP}_3$}
\label{Fig:QuiverdP3}
\end{figure}
and the limiting stability condition 
\begin{align}
\CC_1^{(def)}(dP_3) : && Z_1=Z+\Lambda_2, \qquad Z_2=2\Re Z+\tilde{\Lambda}_2, \qquad Z_3=\bar{Z}+\Lambda_4, \nonumber\\
&&Z_4=Z+\Lambda_2+\Lambda_6 , \quad Z_5=2\Re Z+\tilde{\Lambda}_1+\tilde{\Lambda}_2, \quad Z_6=\bar{Z}+\Lambda_4+\Lambda_8,  \label{eq:CC1defdP3}
\end{align}
with
\begin{gather}
\Lambda_i\in\mathbb{R}, \qquad 4\Re Z+\tilde{\Lambda}_2+\Lambda_2+\Lambda_4=\frac{\pi}{R},\qquad \tilde{\Lambda}_1+\Lambda_6+\Lambda_8=0,
\end{gather}
deforming the fine-tuned stability condition of \cite{DelMonte2021}, which is recovered when $\Lambda_i=\tilde{\Lambda}_i=0$.

We can choose the following basis for the flavour lattice (omitting $dP_3$ labels for convenience):
\begin{align}
\alpha_0=\gamma_3+\gamma_6=\alpha_3^{(dP_4)}+\alpha_4^{(dP_4)} , && \alpha_1=\gamma_1+\gamma_4=\alpha_1^{(dP_4)}+\alpha_2^{(dP_4)}, && \alpha_2=\gamma_2+\gamma_5=\alpha_0^{(dP_4)},\nonumber
\end{align}
\begin{align}\label{eq:dP3-beta-roots}
\beta_0=\gamma_2+\gamma_4+\gamma_6=\alpha_2^{(dP_4)}+\alpha_3^{(dP_4)}, && \beta_1=\gamma_1+\gamma_3+\gamma_5=\alpha_0^{(dP_4)}+\alpha_1^{(dP_4)}+\alpha_4^{(dP_4)}.
\end{align}
As before, we introduce multiplicative root variables by $a_i:=\CX_{\alpha_i}$, $b_i:=\CX_{\beta_i}$. 
The symmetric pairing of the flavour charges \eqref{eq:dP3-beta-roots} coincides with the $(A_2+A_1)^{(1)}$ Cartan matrix \eqref{eq:CartanA2A1} encoded in the Dynkin diagrams in Figure \ref{Fig:A1Dynkin2}.
\begin{figure}
\begin{center}
\begin{subfigure}{.5\textwidth}
\centering
\begin{tikzpicture}

\draw[fill=SecondBlue,thick] (-1,0) circle (.3);
\draw[fill=SecondBlue,thick] (1,0) circle (.3);
\draw[-,thick] (-0.7,0.1) to (0.7,0.1);
\draw[-,thick] (-0.7,-0.1) to (0.7,-0.1);
\draw[->,thick] (-0.8,0.4) to[out=30,in=150] (0.8,0.4);
\draw[<-,thick] (-0.8,-0.4) to[out=330,in=210] (0.8,-0.4);
\node at (-1,0) {$\beta_0$};
\node at (1,0) {$\beta_1$};
\node at (0,1) {$\pi$};
\node at (0,-1) {$\pi$};

\end{tikzpicture}
\end{subfigure}\hfill
\begin{subfigure}{.5\textwidth}
\centering
\begin{tikzpicture}

\draw[fill=SecondBlue,thick] (-1,0) circle (.3);
\draw[fill=SecondBlue,thick] (1,0) circle (.3);
\draw[fill=SecondBlue,thick] (0,1.72) circle (.3);
\draw[-,thick] (-0.7,0) to (0.7,0);
\draw[-,thick] (-1,0.3) to (-0.3,1.72);
\draw[-,thick] (1,0.3) to (0.3,1.72);
\draw[<-,thick] (-1.2,0.4) to[out=30+90,in=180] (-0.4,1.82);
\draw[->,thick] (-0.8,-0.4) to[out=330,in=210] (0.8,-0.4);
\draw[->,thick] (1.2,0.4) to[out=-30+90,in=0] (0.4,1.82);
\node at (-1,0) {$\alpha_2$};
\node at (1,0) {$\alpha_1$};
\node at (0,1.72) {$\alpha_0$};
\node at (-1.5,1.5) {$\pi$};
\node at (1.5,1.5) {$\pi$};
\node at (0,-1) {$\pi$};

\end{tikzpicture}
\end{subfigure}
\end{center}
\caption{$(A_2+A_1)^{(1)}$ Dynkin diagrams and automorphisms. $\pi^2=id$ on the $A_1^{(1)}$ sublattice, while $\pi^3=id$ on the $A_2^{(1)}$ sublattice. The two combine to give the $\mathbb{Z}_6$ permutation symmetry of the quiver.}
\label{Fig:A1Dynkin2}
\end{figure}
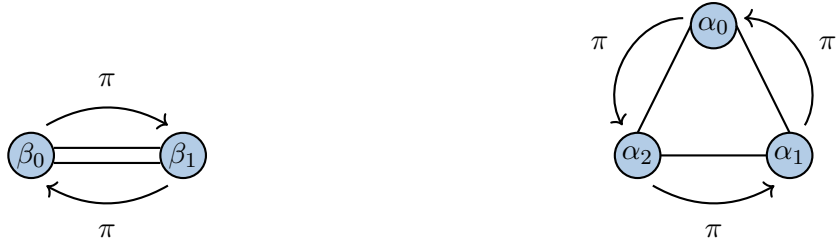
The BPS spectrum obtained from the decoupling procedure coincides with the one computed in \cite{DelMonte2021}, and is given by
\begin{equation}\label{eq:SpectrumdP3}
\begin{array}{|c|c|}
	\hline
	\gamma & \Omega(\gamma;y) \\
	\hline\hline
	\gamma_r + k v_1 & 1\\
	-\gamma_r + (k+1) v_1 & 1\\
	\gamma_s + k v_2 & 1\\
	-\gamma_s + (k+1) v_2 & 1\\
	\hline
	\gamma_t +k\gamma_{D0} & 1 \\
	-\gamma_t +(k+1)\gamma_{D0} & 1 \\
	\gamma_a+\gamma_b +k\gamma_{D0} & 1 \\
	-\gamma_a-\gamma_b +(k+1)\gamma_{D0} & 1 \\
	v_1+k\gamma_{D0} & y+y^{-1} \\
	-v_1+(k+1)\gamma_{D0} & y+y^{-1} \\
	(k+1)\gamma_{D0} & y^3 + 4y+y^{-1}\\
	\hline
\end{array}
\end{equation}
with $k\geq  0$,
\begin{equation}
	r\in\{1,3\}
	\qquad 
	s\in\{4,6\}{}
	\qquad 
	t\in\{2,5\}
	\qquad
	(a,b) \in  \{(1,3), (4,6)\}\,,
\end{equation}
and
\begin{align}
v_1=\gamma_1+\gamma_2+\gamma_3, && v_2=\gamma_4+\gamma_5+\gamma_6.
\end{align}
The corresponding cluster integrable system is the q-Painlev\'e $III_1$ equation (symmetry type $(A_2+A_1)^{(1)}$ in Sakai's classification), with discrete time evolution
\begin{align}\label{eq:T1dP3}
T_1=\pi^2s_2s_1, && T_1(\vec{a},\vec{b})=(q^{-1}a_0,qa_1,a_2,b_0,b_1).
\end{align}
The Dynkin diagram \ref{Fig:A1Dynkin2} now has a $\mathbb{Z}_2$ involution realized on the BPS charges as the permutation
\begin{equation}
\pi^3=(1,4)(2,5)(3,6),
\end{equation}
with $\pi$  given in the cluster realization of the affine Weyl group \eqref{eq:dP3Dih6}.
We are again in a setting where the assumptions of Theorem \ref{thm:ExSo} hold, since when $\Lambda_6=\Lambda_8=0$, $\tilde{\Lambda}_2=\Lambda_1+\tilde{\Lambda}_1$ both the stability condition \eqref{eq:CC1defdP3} and the spectrum \eqref{eq:SpectrumdP3} are invariant under $\pi^3$. The exact solution of the TBA equation is
\begin{align}\label{eq:exsolP3}
\CX_1=\CX_4=e^{\frac{Z+\Lambda_2}{\epsilon}}, && \CX_2=\CX_5=e^{\frac{2\Re Z+\tilde{\Lambda}_2}{\epsilon}}, && \CX_3=\CX_6=e^{\frac{\bar{Z}+\Lambda_4}{\epsilon}}, 
\end{align}
coinciding with the q-Painlev\'e $III_3$ algebraic solution 
\begin{align}\label{eq:algdP3}
\CX_1=\CX_4=a_1^{\frac{1}{2}}, && \CX_2=\CX_5=a_2^{\frac{1}{2}}, && \CX_3=\CX_6=a_0^{\frac{1}{2}}, && a_0a_1a_2=q.
\end{align}
\begin{remark}
Note that the algebraic solution of $dP_3$ can be obtained by direct decoupling of the algebraic solution of $dP_5$. This is because if one tries to take the limit $Z_5,Z_7\rightarrow\infty$, $Z_5+Z_7$ finite that takes $dP_5$ to $dP_4$, the $\mathbb{Z}_2$ symmetry of the stability condition \eqref{eq:CC1algdP5} corresponding to the algebraic solution \eqref{eq:dP5algsol} requires also to simultaneously take the limit $Z_1,Z_3\rightarrow\infty$ with $Z_1+Z_3$ finite, that brings $dP_4$ to $dP_3$. In terms of multiplicative roots/K\"ahler parameters, the limit reads 
\begin{align}
a_2^{(dP_5)},a_4^{(dP_5)}\rightarrow\infty,&&a_0^{(dP_5)},a_3^{(dP_5)}\rightarrow0, 
\end{align}
while keeping finite
\begin{align}
a_1^{(dP_3)}:=a_0^{(dP_5)}(a_2^{(dP_5)})^2, && a_2^{(dP_3)}:=a_2^{(dP_5)}a_3^{(dP_5)}, && a_3^{(dP_3)}:=a_4^{(dP_5)}(a_3^{(dP_5)})^2.
\end{align}
\end{remark}
As it happened for the case of local $dP_5$, there are nontrivial B\"acklund transformation, consisting of flows $T_2,T_3$ on the $A_2^{(1)}$ sublattice (equation \eqref{eq:T1T2T3}), and of the flow $T_4$ on the $A_1^{(1)}$ sublattice (equation \eqref{eq:T4}). Rational solutions are constructed from the elementary one by action of B\"acklund transformations. The flows $T_2$, $T_3$ act rather simply:
\begin{align}
	T_2(\CX_i(a_0,a_1,a_2))=\CX_i(a_0,q^{-1}a_2,qa_3), && T_3(\CX_i(a_0,a_1,a_2))=\CX_i(qa_0,a_1,q^{-1}a_2).
\end{align}
On the other hand, the action of $T_4=(4,6)\mu_2\mu_4\mu_6\mu_2(4,5,1,2,3)$ gives quite nontrivial rational solutions. For example,
\begin{align}
	T_4(\CX_1)=a_1a_3^{\frac{1}{2}}\left(\frac{ 1+a_2^{\frac{1}{2}}\left(1+a_1^{\frac{1}{2}}\right) }{1+a_1^{\frac{1}{2}} \left(1+a_3^{\frac{1}{2}}\right)}\right), 
	&& 
	T_4(\CX_2)=a_1^{-\frac{1}{2}}\left(\frac{1+a_2^{\frac{1}{2}}\left(1+a_1^{\frac{1}{2}}\right) }{1+ a_3^{\frac{1}{2}}\left(1+a_2^{\frac{1}{2}}\right) }\right)
\end{align}
\begin{align}
	T_4(\CX_3)=a_2^{\frac{1}{2}} a_3\left(\frac{ 1+a_1^{\frac{1}{2}} \left(1+a_3^{\frac{1}{2}}\right)}{1+a_3^{\frac{1}{2}}\left(1+a_2^{\frac{1}{2}}\right) }\right), 
	&& 
	T_4(\CX_4)=a_3^{-\frac{1}{2}}\left(\frac{1+a_1 ^{\frac{1}{2}}\left(1+a_3^{\frac{1}{2}}\right)}{1+a_2^{\frac{1}{2}}\left(1+a_1^{\frac{1}{2}}\right)}\right)
\end{align}
\begin{align}
	T_4(\CX_5)=a_1^{\frac{1}{2}} a_2\left(\frac{1+ a_3^{\frac{1}{2}}\left(1+a_2^{\frac{1}{2}}\right) }{1+a_2^{\frac{1}{2}}\left(1+a_1^{\frac{1}{2}}\right) }\right), 
	&& 
	T_4(\CX_6)=a_2^{-\frac{1}{2}}\left(\frac{1+a_3^{\frac{1}{2}}\left(1+a_2^{\frac{1}{2}}\right) }{1+a_1^{\frac{1}{2}} \left(1+a_3^{\frac{1}{2}}\right)}\right)
\end{align}
is another solution. An infinite number of rational solutions for the q-Painlev\'e $III_3$  flow generated by $T_1$ can be generated by successive application of $T_2,T_3,T_4$ to the "seed" solution \eqref{eq:algdP3}, yielding an infinite number of exact solutions to the BPS RHP in the respective transformed chambers.

\subsection{Local $dP_2$}

To degenerate further to local $dP_2$, we send $Z_4,\,Z_5\rightarrow\infty$ 
While keeping $Z_4+Z_5$ finite, which can be achieved by sending $\Lambda_6\rightarrow+\infty$, $\tilde{\Lambda}_1\rightarrow-\infty$ while keeping finite $\Lambda_6+\tilde{\Lambda}_1=-\Lambda_8$. After relabeling
\begin{align}
\gamma_1^{(dP_2)}:=\gamma_3^{(dP_3)}, && \gamma_2^{(dP_2)}:=\gamma_4^{(dP_3)}+\gamma_5^{(dP_3)}, && \gamma_3^{(dP_2)}:=\gamma_6^{(dP_3)}, \nonumber
\end{align}
\begin{align}
\gamma_4^{(dP_2)}:=\gamma_1^{(dP_3)}, && \gamma_5^{(dP_2)}:=\gamma_2^{(dP_3)}
\end{align}
one obtains the local $dP_2$ quiver in Figure \ref{Fig:QuiverdP2}. The stability condition becomes
\begin{align}
\CC_1^{(def)}(dP_2) : && Z_1=\bar{Z}+\Lambda_4, \qquad Z_2=Z+2\Re Z+\tilde{\Lambda}_2+\Lambda_2-\Lambda_8,\nonumber\\
&& \qquad Z_3=\bar{Z}+\Lambda_4+\Lambda_8, \quad Z_4=Z+\Lambda_2 , \quad Z_5=2\Re Z+\tilde{\Lambda}_2,  \label{eq:CC1defdP2}
\end{align}
with
\begin{gather}
\Lambda_i\in\mathbb{R}, \qquad 4\Re Z+\tilde{\Lambda}_2+\Lambda_2+\Lambda_4=\frac{\pi}{R}.
\end{gather}
\begin{figure}[h]
\begin{center}
\includegraphics[width=.35\textwidth]{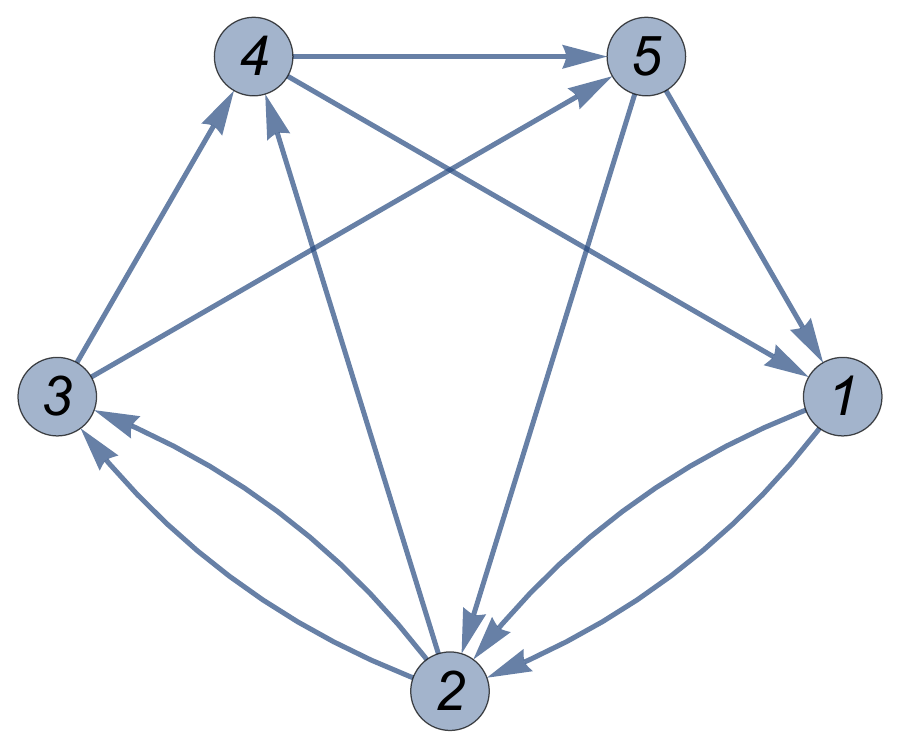}
\end{center}
\caption{BPS quiver for $\text{dP}_2$}
\label{Fig:QuiverdP2}
\end{figure}
The new basis for $\Gamma_f$
\begin{align}
\alpha_0=\gamma_2+\gamma_4+\gamma_5=\alpha_0^{(dP_3)}+\alpha_1^{(dP_3)}, && \alpha_1=\gamma_1+\gamma_3=\alpha_2^{(dP_3)},\nonumber 
\end{align}
\begin{equation}
	\beta_0=\gamma_2-2\gamma_4+2\gamma_5-\gamma_1=2\beta_0^{(dP_3)}+\alpha_0^{(dP_3)}-\alpha_1^{(dP_3)}
\end{equation}
is a basis of simple roots for the lattice $\mathcal{Q}(A_1+A_1)^{(1)}$, as expected from the diagram \ref{Fig:Sakai}, and the intersection form \eqref{eq:CartanD5} in this basis becomes the appropriate Cartan matrix \eqref{eq:CartanA1A1}. The BPS spectrum obtained from the degeneration procedure is
\begin{equation}\label{eq:SpectrumdP2}
\begin{array}{|c|c|}
	\hline
	\gamma & \Omega(\gamma;y) \\
	\hline\hline
	\gamma_r + k v_1 & 1\\
	-\gamma_r + (k+1) v_1 & 1\\
	\gamma_3 + k v_2 & 1\\
	-\gamma_3 + (k+1) v_2 & 1\\
	\hline
	\gamma_5 +k\gamma_{D0} & 1 \\
	-\gamma_5 +(k+1)\gamma_{D0} & 1 \\
	\gamma_1+\gamma_4 +k\gamma_{D0} & 1 \\
	-\gamma_1-\gamma_4 +(k+1)\gamma_{D0} & 1 \\
	v_1+k\gamma_{D0} & y+y^{-1} \\
	-v_1+(k+1)\gamma_{D0} & y+y^{-1} \\
	(k+1)\gamma_{D0} & y^3 + 4y+y^{-1}\\
	\hline
\end{array}
\end{equation}
with $k\geq  0$ and
\begin{align}
r\in\{1,4\}, && v_1=\gamma_1+\gamma_4+\gamma_5, && v_2=\gamma_2+\gamma_3,
\end{align}
which indeed corresponds to the stability condition \eqref{eq:CC1defdP2} and is obtained from the tilting sequence
\begin{equation}
\textbf{m}^{(dP_2)}=\mu_1\mu_3\mu_5\mu_2\mu_4\mu_1\mu_3\mu_5\mu_4\mu_1\mu_2\mu_5\mu_3\mu_4\mu_1\mu_2\mu_5\mu_4,
\end{equation}
corresponding to the affine translation
\begin{equation}
T\left(\vec{\alpha},\vec{\beta} \right)=\left(\alpha_0+6\delta,\alpha_1-6\delta,\beta_0+6\delta,\beta_1-6\delta \right).
\end{equation}
Interestingly, and differently from the previous cases, it does not seem that the mutation sequence can be decomposed into more ``fundamental''  translations. Finally, as we observed in the case of local $dP_4$, one does not have a semiclassically exact solution to the TBAs, which in general will be instead given by solutions of q-Painlev\'e $III_2$ (symmetry type $E_2^{(1)}$ in Sakai's classification). This again corresponds to the fact that the corresponding q-Painlev\'e equation does not have algebraic solutions invariant under foldings \cite{Bershtein:2021gkq}.

\subsection{Local $\mathbb{P}^1\times\mathbb{P}^1$}

Starting from local $dP_2$, there are two possible degenerations, one leading to local $\mathbb{P}^1\times\mathbb{P}^1$, and the other to local $dP_1$. To degenerate to local $\mathbb{P}^1\times\mathbb{P}^1$ we send $Z_4,Z_5\rightarrow\infty$ while keeping $Z_4+Z_5$ finite, i.e. end $\Lambda_2\rightarrow-\infty$, $\tilde{\Lambda}_2\rightarrow+\infty$ with $\Lambda_2+\tilde{\Lambda}_2:=\Lambda$ finite.
After relabeling
\begin{align}
\gamma_4^{(\mathbb{P}^1\times\mathbb{P}^1)}:=\gamma_4^{(dP_2)}+\gamma_5^{(dP_2)}, 
\end{align}
we obtain the quiver in Figure \ref{Fig:QuiverP1P1}, which is the BPS quiver of local $\mathbb{P}^1\times\mathbb{P}^1$, and the stability condition
\begin{align}\label{eq:CC1defP1P1}
\CC_1^{(def)}(\mathbb{P}^1\times\mathbb{P}^1): && Z_1=Z+2\Re Z+\tilde{\Lambda}, && Z_2=\bar{Z}+\Lambda_4, \\
&& Z_3=Z+2\Re Z+\tilde{\Lambda}-\Lambda_8, && Z_4=\bar{Z}+\Lambda_4+\Lambda_8,
\end{align}
which is nothing but the collimation chamber for local $\mathbb{P}^1\times\mathbb{P}^1$ from \cite{DelMonte2022a}, parametrized in a slightly different way.
The new basis for $\Gamma_f$ 
\begin{align}
\alpha_0^{(\mathbb{P}^1\times\mathbb{P}^1)}:=\alpha_1^{(dP_2)} && \alpha_1^{(\mathbb{P}^1\times\mathbb{P}^1)}:=\alpha_0^{(dP_2)}
\end{align}
gives the simple roots of $\mathcal{Q}(A_1^{(1)})$ with intersection matrix \eqref{eq:CartanA1dP1}. The limiting BPS spectrum is
\begin{figure}[h]
\begin{center}
\begin{subfigure}{.5\textwidth}
\centering
\includegraphics[width=.6\textwidth]{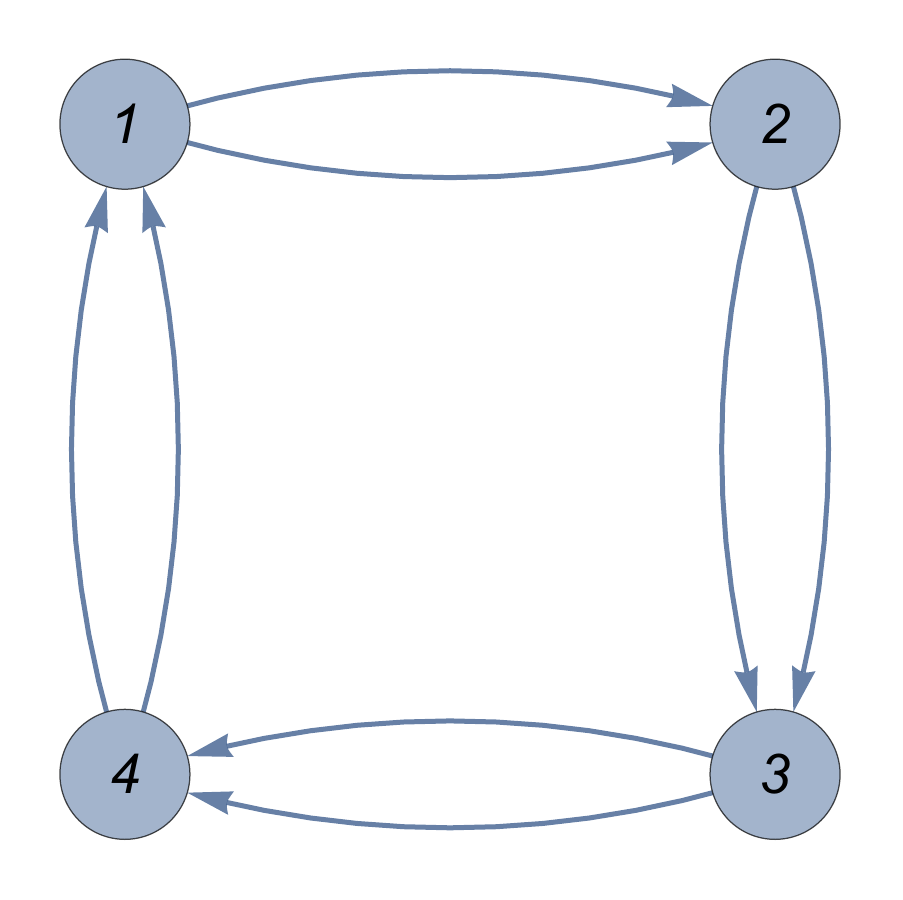}
\caption{BPS quiver for $\mathbb{P}^1\times\mathbb{P}^1$}
\label{Fig:QuiverP1P1}
\end{subfigure}\hfill
\begin{subfigure}{.5\textwidth}
\centering
\includegraphics[width=.6\textwidth]{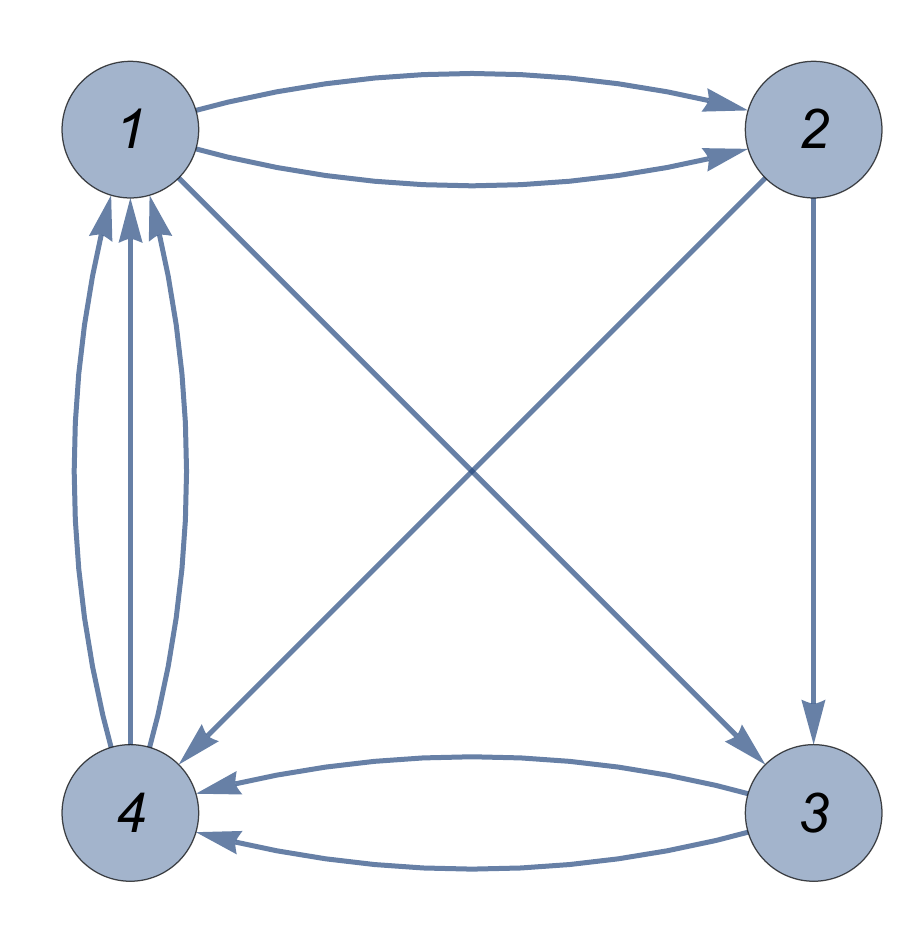}
\caption{BPS quiver for $\text{dP}_1$}
\label{Fig:QuiverdP1}
\end{subfigure}
\end{center}
\end{figure}
\begin{equation}\label{eq:SpectrumP1P1}
\begin{array}{|c|c|}
	\hline
	\gamma & \Omega(\gamma;y) \\
	\hline\hline
	\gamma_1 + k v_1 & 1\\
	-\gamma_1 + (k+1) v_1 & 1\\
	\gamma_3 + k v_2 & 1\\
	-\gamma_3 + (k+1) v_2 & 1\\
	\hline
	v_1+k\gamma_{D0} & y+y^{-1} \\
	-v_1+(k+1)\gamma_{D0} & y+y^{-1} \\
	(k+1)\gamma_{D0} & y^3 + 4y+y^{-1}\\
	\hline
\end{array}
\end{equation}
with $k\geq  0$ and
\begin{align}
v_1=\gamma_1+\gamma_2, && v_2=\gamma_3+\gamma_4,
\end{align}
and coincides with the BPS spectrum for the collimation chamber of local $\mathbb{P}^1\times\mathbb{P}^1$ computed in \cite{Longhi:2021qvz,Bonelli2020,DelMonte2021}.
\subsection{Local $dP_1$}
To degenerate to local $dP_1$ we instead send $Z_1,Z_5\rightarrow\infty$ while keeping $Z_1+Z_5$ finite in \eqref{eq:CC1defdP2}. This limit is slightly less straightforward than the previous ones, as we have to send $\Lambda_4\rightarrow-\infty$, $\tilde{\Lambda}_2,\,\Lambda_8\rightarrow+\infty$, while keeping finite the combinations $
\tilde{\Lambda}_2+\Lambda_4:=\tilde{\Lambda}_3$, $\Lambda_4+\Lambda_8:=\tilde{\Lambda}_4$.
After relabeling
\begin{align}
\gamma_1^{(dP_1)}:=\gamma_2^{(dP_2)}, && \gamma_2^{(dP_1)}:=\gamma_3^{(dP_2)}, && \gamma_3^{(dP_1)}:=\gamma_4^{(dP_2)}, && \gamma_4^{(dP_1)}:=\gamma_1^{(dP_2)}+\gamma_5^{(dP_2)},
\end{align}
we obtain the local $dP_1$ BPS quiver in Figure \ref{Fig:QuiverdP1}, and the stability condition
\begin{align}\label{eq:CC1defdP1}
\CC_1^{(def)}(dP_1): && Z_1=Z+2\Re Z+\Lambda_2+\tilde{\Lambda}_3-\tilde{\Lambda}_4, && Z_2=\bar{Z}+\tilde{\Lambda}_3, \\
&& Z_3=Z+\Lambda_2, && Z_4=\bar{Z}+2\Re Z+\tilde{\Lambda}_3,
\end{align}
with $4\Re Z+\Lambda_2+\tilde{\Lambda}_3=\frac{\pi}{R}$.
The new basis of $\Gamma_f$ is
\begin{align}
\alpha_0^{(dP_1)}:=\frac{1}{2}\alpha_0^{(dP_2)}-\alpha_1^{(dP_2)}+\frac{1}{2}\beta_0^{(dP_2)}, && \alpha_1^{(dP_1)}:=\frac{1}{2}\alpha_0^{(dP_2)}+2\alpha_1^{(dP_2)}-\frac{1}{2}\beta_0^{(dP_2)},
\end{align}
and involves fractional linear combinations of the local $dP_2$ roots. The limiting spectrum is still given by \eqref{eq:SpectrumP1P1}, providing a derivation from decoupling of a proposal made in \cite{Bonelli2020}, where it was conjectured that the BPS spectrum of local $\mathbb{P}^1\times\mathbb{P}^1$ and local $dP_1$ should coincide.

\subsection{Local $\mathbb{P}^2$}

Sadly, our degeneration journey must end here. In order to decouple \eqref{eq:CC1defdP1} to local $\mathbb{P}^2$, we should send $Z_2,Z_3\rightarrow\infty$ while keeping $Z_2+Z_3,Z_1,Z_4$ all finite, and this cannot be done whithin the collimation chamber \eqref{eq:CC1defdP1}. This impossibility is likely due to the fact that the root lattice for this geometry is $A_0^{(1)}$, with only the null root corresponding to the $D0$ brane. As a result, there is no nontrivial affine translation that could correspond to a tilting, which physically corresponds to the absence of a low-energy gauge theory phase for this geometry.

\section{Wild algebraic solutions: the case of local $dP_3$}\label{sec:dP3AD}

Up to now we discussed BPS spectra obtained by either deforming or degenerating the local $dP_5$ stability condition \eqref{eq:dP5algsol} associated to an algebraic solution of q-Painlev\'e VI. Importantly, all the algebraic solutions discussed up to now (including local $dP_3$ and local $\mathbb{P}^1\times\mathbb{P}^1$) had an associated stability condition lying away from a wall of marginal stability. In this section, we will discuss an example where an algebraic solution exists but lies on a wall of marginal stability: as it turns out, it still signals the presence of an interesting chamber, as long as one deforms away from the marginally stable collimated configuration. 

Consider local $dP_3$: the algebraic solution \eqref{eq:algdP3} was obtained from the configuration invariant under the $\mathbb{Z}_2$ involution $\pi^3=(1,4)(2,5)(3,6)$, and evolved according to the time evolution \eqref{eq:T1dP3} on the $A_2^{(1)}$ sublattice of $\Gamma_f\simeq\mathcal{Q}(A_2+A_1)^{(1)}$. We can also consider the algebraic solution
\begin{align}\label{eq:dP3A2alg}
\CX_1=\CX_3=\CX_5=b_0^{-1/3}, && \CX_2=\CX_4=\CX_6=qb_0^{1/3},
\end{align}
invariant under the $\mathbb{Z}_3$ quiver automorphism $\pi^2=(1,3,5)(2,4,6)$. It evolves under the (q-Painlev\'e IV) time evolution 
\begin{equation}
	T_4=\mu_2\mu_6\mu_4\mu_2(1,4,3,6)(2,5)=(1,4,3,6)(2,5)\mu_5\mu_3\mu_1\mu_5,
\end{equation}
acting on the $A_1^{(1)}$ sublattice generated by the roots $\beta_0,\beta_1$ in \eqref{eq:dP3-beta-roots}. The action of $T_4$ on the BPS charges is
\begin{align}
	T_4^{3n+1}(\vec{\gamma})=\left( \begin{array}{c}
	\gamma_1+\gamma_3+\gamma_4+n\delta \\
	-\gamma_1-n\delta \\
	\gamma_3+\gamma_5+\gamma_6+n\delta \\
	-\gamma_3-n\delta \\
	\gamma_5+\gamma_1+\gamma_2+n\delta \\
	-\gamma_5-n\delta 		
	\end{array} \right), &&
	T_4^{3n+2}(\vec{\gamma})=\left( \begin{array}{c}
	\gamma_2-(n-1)\delta \\
	-(\gamma_1+\gamma_3+\gamma_4)-n\delta \\
	\gamma_4+(n-1)\delta \\
	-(\gamma_3+\gamma_5+\gamma_6)+n\delta \\
	\gamma_6-(n-1)\delta \\
	-(\gamma_1+\gamma_2+\gamma_5)-n\delta		
	\end{array} \right),
\end{align}
\begin{equation}
	T_4^{3n}(\vec{\gamma})=\left( \begin{array}{c}
	\gamma_1+n\delta \\
	\gamma_2-n\delta \\
	\gamma_3+n\delta \\
	\gamma_4-n\delta \\
	\gamma_5+n\delta \\
	\gamma_6-n\delta		
	\end{array} \right),
\end{equation}
while on the (multiplicative) root variables it acts as
\begin{align}
	T_4(\alpha_0,\alpha_1,\alpha_2,\beta_0,\beta_1)=(\alpha_0,\alpha_1,\alpha_2,\beta_0+\delta,\beta_1-\delta), \\ T_4(a_0,a_1,a_2,b_0,b_1)=(a_0,a_1,a_2,qb_0,q^{-1}b_1).
\end{align}
The corresponding stability condition
\begin{equation}\label{eq:dP3MargStab}
\CC_1^{(alg)}(dP_3)':\qquad Z:=Z_1=Z_3=Z_5,\qquad Z_2=Z_3=Z_6=\frac{2\pi}{3R}-Z
\end{equation}

lies on a wall of marginal stability, because mutually nonlocal central charges are aligned. Let us ignore this for a moment and work with it as a formal stability condition. First note that, up to permutations, if we define
\begin{align}
\textbf{m}_1:=\mu_1\mu_5\mu_1\mu_3\mu_1 && \textbf{m}_2:=\mu_2\mu_6\mu_2\mu_4\mu_2,
\end{align}
we have
\begin{align}
	\textbf{m}_1\sim T_4, && \textbf{m}_2\textbf{m}_1\sim T_4^2, && \textbf{m}_1\textbf{m}_2\textbf{m}_1\sim T_4^3,
\end{align}
i.e. the mutation sequence $\textbf{m}_1\textbf{m}_2\textbf{m}_1$ is induced by the affine translation $T_4$ \footnote{It might not seem so, because $T_4$ has 4 mutation while $\textbf{m}_i$ have 5. However there are relations between the mutations because $135$ and $246$ are closed cycles in the quiver.}. Keeping track of the states that are being acted upon by mutations (in red below), we have
\begin{equation}
\begin{split}
	\textbf{m}_1&:\left( \begin{array}{c}
	{\color{red}\gamma_1} \\
	\gamma_2 \\
	\gamma_3 \\
	\gamma_4 \\
	\gamma_5 \\
	\gamma_6		
	\end{array} \right)  
	\xrightarrow{\mu_1}\left( \begin{array}{c}
	-\gamma_1 \\
	\gamma_1+\gamma_2 \\
	{\color{red}\gamma_1+\gamma_3} \\
	\gamma_4 \\
	\gamma_5 \\
	\gamma_6		
	\end{array} \right)
	\xrightarrow{\mu_3}\left( \begin{array}{c}
	{\color{red}\gamma_3} \\
	\gamma_1+\gamma_2 \\
	-\gamma_1-\gamma_3 \\
	\gamma_1+\gamma_3+\gamma_4 \\
	\gamma_5 \\
	\gamma_6		
	\end{array} \right)
	\xrightarrow{\mu_1}\left( \begin{array}{c}
	-\gamma_3 \\
	\gamma_1+\gamma_2 \\
	-\gamma_1 \\
	\gamma_1+\gamma_3+\gamma_4 \\
	{\color{red}\gamma_3+\gamma_5} \\
	\gamma_6		
	\end{array} \right) \\
	&\xrightarrow{\mu_5}\left( \begin{array}{c}
	{\color{red}\gamma_5} \\
	\gamma_1+\gamma_2 \\
	-\gamma_1 \\
	\gamma_1+\gamma_3+\gamma_4 \\
	-\gamma_3-\gamma_5 \\
	\gamma_3+\gamma_5+\gamma_6		
	\end{array} \right)\xrightarrow{\mu_1}\left( \begin{array}{c}
	-\gamma_5 \\
	\gamma_1+\gamma_2+\gamma_5 \\
	-\gamma_1 \\
	\gamma_1+\gamma_3+\gamma_4 \\
	-\gamma_3 \\
	\gamma_3+\gamma_5+\gamma_6		
	\end{array} \right)=(1,4,5,2,3,6)T_4(\vec{\gamma}),
\end{split}
\end{equation}
\begin{equation}
\begin{split}
	&\textbf{m}_2:\left( \begin{array}{c}
	-\gamma_5 \\
	\color{red}\gamma_1+\gamma_2+\gamma_5 \\
	-\gamma_1 \\
	\gamma_1+\gamma_3+\gamma_4 \\
	-\gamma_3 \\
	\gamma_3+\gamma_5+\gamma_6		
	\end{array} \right)
	\xrightarrow{\mu_2}\left( \begin{array}{c}
	\gamma_1+\gamma_2 \\
	-\gamma_1-\gamma_2-\gamma_5 \\
	-\gamma_1\\
	\color{red}2\gamma_1+\gamma_2+\gamma_3+\gamma_4+\gamma_5 \\
	-\gamma_3 \\
	\gamma_3+\gamma_5+\gamma_6		
	\end{array} \right)\\
	&\xrightarrow{\mu_4}\left( \begin{array}{c}
	\gamma_1+\gamma_2 \\
	\color{red}\gamma_1+\gamma_3+\gamma_4 \\
	\gamma_1+\gamma_2+\gamma_3+\gamma_4+\gamma_5\\
	-2\gamma_1-\gamma_2-\gamma_3-\gamma_4-\gamma_5 \\
	-\gamma_3 \\
	\gamma_3+\gamma_5+\gamma_6		
	\end{array} \right)
	\xrightarrow{\mu_2}\left( \begin{array}{c}
	\gamma_1+\gamma_2 \\
	-\gamma_1-\gamma_3-\gamma_4 \\
	\gamma_1+\gamma_2+\gamma_3+\gamma_4+\gamma_5\\
	-\gamma_1-\gamma_2-\gamma_5 \\
	-\gamma_3 \\
	\color{red}\gamma_1+2\gamma_3+\gamma_4+\gamma_5+\gamma_6		
	\end{array} \right)\\
	&\xrightarrow{\mu_6}\left( \begin{array}{c}
	\gamma_1+\gamma_2 \\
	\color{red}\gamma_3+\gamma_5+\gamma_6 \\
	\gamma_1+\gamma_2+\gamma_3+\gamma_4+\gamma_5\\
	-\gamma_1-\gamma_2-\gamma_5 \\
	\gamma_1+\gamma_3+\gamma_4+\gamma_5+\gamma_6 \\
	-\gamma_1-2\gamma_3-\gamma_4-\gamma_5-\gamma_6		
	\end{array} \right)
	\xrightarrow{\mu_2}\left( \begin{array}{c}
	\gamma_1+\gamma_2+\gamma_3+\gamma_5+\gamma_6 \\
	-\gamma_3-\gamma_5-\gamma_6 \\
	\gamma_1+\gamma_2+\gamma_3+\gamma_4+\gamma_5\\
	-\gamma_1-\gamma_2-\gamma_5 \\
	\gamma_1+\gamma_3+\gamma_4+\gamma_5+\gamma_6 \\
	-\gamma_1-\gamma_3-\gamma_4		
	\end{array} \right)\\
	&=(1,5,3)(2,6,4)T_4^2(\vec{\gamma})
\end{split}
\end{equation}
\begin{equation}
\begin{split}
&\textbf{m}_1:\left( \begin{array}{c}
	\color{red}\gamma_1+\gamma_2+\gamma_3+\gamma_5+\gamma_6 \\
	-\gamma_3-\gamma_5-\gamma_6 \\
	\gamma_1+\gamma_2+\gamma_3+\gamma_4+\gamma_5\\
	-\gamma_1-\gamma_2-\gamma_5 \\
	\gamma_1+\gamma_3+\gamma_4+\gamma_5+\gamma_6 \\
	-\gamma_1-\gamma_3-\gamma_4		
	\end{array} \right)\xrightarrow{\mu_1}\left( \begin{array}{c}
	-\gamma_1-\gamma_2-\gamma_3-\gamma_5-\gamma_6 \\
	\gamma_1+\gamma_2 \\
	\color{red}2\gamma_1+2\gamma_2+2\gamma_3+\gamma_4+2\gamma_5+\gamma_6\\
	-\gamma_1-\gamma_2-\gamma_5 \\
	\gamma_1+\gamma_3+\gamma_4+\gamma_5+\gamma_6 \\
	-\gamma_1-\gamma_3-\gamma_4		
	\end{array} \right)\\
	&\xrightarrow{\mu_3}\left( \begin{array}{c}
	\color{red}\gamma_1+\gamma_2+\gamma_3+\gamma_4+\gamma_5 \\
	\gamma_1+\gamma_2 \\
	-2\gamma_1-2\gamma_2-2\gamma_3-\gamma_4-2\gamma_5-\gamma_6\\
	\gamma_1+\gamma_2+2\gamma_3+\gamma_4+\gamma_5+\gamma_6 \\
	\gamma_1+\gamma_3+\gamma_4+\gamma_5+\gamma_6 \\
	-\gamma_1-\gamma_3-\gamma_4		
	\end{array} \right)
	\xrightarrow{\mu_1}\left( \begin{array}{c}
	-\gamma_1-\gamma_2-\gamma_3-\gamma_4-\gamma_5 \\
	\gamma_1+\gamma_2 \\
	-\gamma_1-\gamma_2-\gamma_3-\gamma_5-\gamma_6\\
	\gamma_1+\gamma_2+2\gamma_3+\gamma_4+\gamma_5+\gamma_6 \\
	\color{red}2\gamma_1+\gamma_2+2\gamma_3+2\gamma_4+2\gamma_5+\gamma_6 \\
	-\gamma_1-\gamma_3-\gamma_4		
	\end{array} \right)\\
	&\xrightarrow{\mu_5}\left( \begin{array}{c}
	\color{red}\gamma_1+\gamma_3+\gamma_4+\gamma_5+\gamma_6 \\
	\gamma_1+\gamma_2 \\
	-\gamma_1-\gamma_2-\gamma_3-\gamma_5-\gamma_6\\
	\gamma_1+\gamma_2+2\gamma_3+\gamma_4+\gamma_5+\gamma_6 \\
	-2\gamma_1-\gamma_2-2\gamma_3-2\gamma_4-2\gamma_5-\gamma_6 \\
	\gamma_1+\gamma_2+\gamma_3+\gamma_4+2\gamma_5+\gamma_6		
	\end{array} \right)
	\xrightarrow{\mu_1}\left( \begin{array}{c}
	\gamma_2-\gamma_{D0} \\
	\gamma_1+\gamma_{D0} \\
	\gamma_4-\gamma_{D0}\\
	\gamma_3+\gamma_{D0} \\
	\gamma_6-\gamma_{D0} \\
	\gamma_5+\gamma_{D0}		
	\end{array} \right)=(1,2)(3,4)(5,6)T_4^3(\vec{\gamma}).
\end{split}
\end{equation}
Note that this is also the mutation sequence that would be induced by a tilting of the upper half-plane for the stability condition \eqref{eq:dP3MargStab}, but since we are on a wall of marginal stability the ordering of mutations in a tilting is ambiguously defined. The solution is to slightly deform the stability condition, into
\begin{align}\label{eq:CC1defdP3p}
	\CC_1^{(def)}(dP_3)':\qquad Z_1=Z-\Lambda_1, \quad Z_2=\bar{Z}-\Lambda_2, \quad Z_3=Z, \\
	 Z_4=\bar{Z}+\Lambda_1, \quad Z_5=Z+\Lambda_2, \quad Z_6=\bar{Z}.
\end{align}
For small enough positive $\Lambda_1,\,\Lambda_2$ the phase ordering of the central charges on the upper-right quadrant is given by the states in red in the equations above, so that $T_4$ indeed represents a tilting for this stability condition. The states in the lower-right quadrand can be obtained in a similar way by performing the tilting counterclockwise. 
Similarly to what we saw in Section \ref{sec:degens}, we have Peacock patterns of states with a single limiting ray, the real axis, and for generic $\Lambda_1,\,\Lambda_2$ no central charges of mutually nonlocal states are aligned. 

While the states outside the limiting ray are those produced by $T_4$, the ones on the real axis can be found following \cite{DelMonte2021} using the permutations symmetries of the quiver, in the following way. First recall that the Kontsevich-Soibelman wall-crossing invariant/quantum monodromy $\mathbb{U}$ is defined as:
\begin{equation}\label{eq:factorization}
\mathbb{U}=\prod_{\gamma\in\Gamma_+}^{\curvearrowleft}\prod_{m\in \IZ} \Phi((-y)^m X_{\gamma})^{a_m(\gamma)}.
\end{equation}
Here 
\begin{itemize}
\item $\Gamma_+:=\{\gamma\in\Gamma:\,\Re Z_\gamma>0\}$;
\item The phase ordering $\curvearrowleft$ is defined by decreasing $\arg Z_\gamma\in\left(\frac{\pi}{2},-\frac{\pi}{2} \right)$ from left to right;
\item $\Omega(\gamma,y):=\sum_m(-y)^ma_m(\gamma)$ is the Protected Spin Character \cite{Gaiotto:2010be}, coinciding with the BPS index of \cite{Gaiotto:2009hg} when $y=-1$. Geometrically they are respectively the motivic and unrefined Donaldson-Thomas invariants of the coherent sheaf with Chern character vector $\gamma$;
\item The quantum dilogarithm is $\Phi(x):=\prod_{n=0}^\infty(1+y^{2n+1}x)^{-1} $;
\item $X_\gamma$ is a noncommutative deformation of the $\CX$-cluster variables, such that $X_\gamma X_{\gamma'}=y^{\langle\gamma,\gamma'\rangle}X_{\gamma+\gamma'}$.
\end{itemize}
In a given chamber $\CC$, the wall-crossing invariant $\mathbb{U}$ can be factorized into an ordered product $\mathbb{U}(\measuredangle^+,\CC )$ from states with central charges lying on the upper-right quadrant, $\mathbb{U}_0(\CC)$ from the real axis, and a product on the real axis, and $\mathbb{U}(\measuredangle^-,\CC )$ from the lower-right quadrant:
\begin{equation}\label{eq:UFactor}
\mathbb{U}(\CC_1')=\mathbb{U}(\measuredangle^+,\CC_1')\mathbb{U}_0(\CC_1')\mathbb{U}(\measuredangle^-,\CC_1').
\end{equation}
In the chamber \eqref{eq:CC1defdP3p}, we can immediately write down the contributions from states outside the real axis:
{\scriptsize\begin{equation}
\begin{split}
\mathbb{U}(\measuredangle^{+},\CC_1')&=\prod_{k\ge0}^{\nearrow}\bigg(\Phi\left(X_{\gamma_1+k\gamma_{D0}}\right)\Phi\left(X_{\gamma_1+\gamma_3}+2k\gamma_{D0} \right) \Phi\left(X_{\gamma_3}+k\gamma_{D0} \right)\Phi\left(X_{\gamma_3+\gamma_5}+2k\gamma_{D0} \right)\Phi\left(X_{\gamma_5}+k\gamma_{D0} \right)\bigg) \\
& \times\bigg(\Phi\left(X_{\gamma_1+\gamma_2+\gamma_5}+k\gamma_{D0} \right)\Phi\left(X_{2\gamma_1+\gamma_2+\gamma_3+\gamma_4+\gamma_5}+2k\gamma_{D0} \right)\Phi\left(X_{\gamma_1+\gamma_3+\gamma_4}+k\gamma_{D0} \right)\\
&\times\Phi\left(X_{\gamma_1+2\gamma_3+\gamma_4+\gamma_5+\gamma_6}+2k\gamma_{D0} \right)\Phi\left(X_{\gamma_3+\gamma_5+\gamma_6}+k\gamma_{D0} \right)  \bigg) \\
&\times\bigg(\Phi\left(X_{-\gamma_4+(k+1)\gamma_{D0}}\right)\Phi\left(X_{-(\gamma_4+\gamma_6)}+2(k+1)\gamma_{D0} \right) \Phi\left(X_{-\gamma_6}+(k+1)\gamma_{D0} \right)\\
& \times \Phi\left(X_{-(\gamma_2+\gamma_6)}+2(k+1)\gamma_{D0} \right)\Phi\left(X_{-\gamma_2}+(k+1)\gamma_{D0} \right)\bigg),
\end{split}
\end{equation}}
{\scriptsize\begin{equation}
\begin{split}
\mathbb{U}(\measuredangle^{-},\CC_1')&=\prod_{k\ge0}^{\searrow}\bigg(\Phi\left(X_{\gamma_2+k\gamma_{D0}}\right)\Phi\left(X_{\gamma_2+\gamma_6}+2k\gamma_{D0} \right) \Phi\left(X_{\gamma_6}+k\gamma_{D0} \right)\Phi\left(X_{\gamma_4+\gamma_6}+2k\gamma_{D0} \right)\Phi\left(X_{\gamma_4}+k\gamma_{D0} \right)\bigg) \\
& \times\bigg(\Phi\left(X_{\gamma_1+\gamma_2+\gamma_4}+k\gamma_{D0} \right)\Phi\left(X_{\gamma_1+2\gamma_2+\gamma_4+\gamma_5+\gamma_6}+2k\gamma_{D0} \right)\Phi\left(X_{\gamma_2+\gamma_5+\gamma_6}+k\gamma_{D0} \right)\\
&\times\Phi\left(X_{\gamma_2+\gamma_3+\gamma_4+\gamma_5+2\gamma_6}+2k\gamma_{D0} \right)\Phi\left(X_{\gamma_3+\gamma_4+\gamma_6}+k\gamma_{D0} \right)  \bigg) \\
&\times\bigg(\Phi\left(X_{-\gamma_5+(k+1)\gamma_{D0}}\right)\Phi\left(X_{-(\gamma_3+\gamma_5)}+2(k+1)\gamma_{D0} \right) \Phi\left(X_{-\gamma_3}+(k+1)\gamma_{D0} \right)\\
& \times \Phi\left(X_{-(\gamma_1+\gamma_3)}+2(k+1)\gamma_{D0} \right)\Phi\left(X_{-\gamma_1}+(k+1)\gamma_{D0} \right)\bigg)
\end{split}
\end{equation}}
where $\nearrow$ means increasing $k$ from left to right, while $\searrow$ the opposite. To obtain the states along the real axis, note that under the $\mathbb{Z}_6$ permutation symmetry $\pi=(1,2,3,4,5,6)$ of the quiver \ref{Fig:QuiverdP3}, the stability condition \eqref{eq:CC1defdP3p} gets mapped to other equivalent ones of the same type, with spectrum related by the permutation $\pi$. Denote by $\CC_n:=\pi^{n-1}(\CC_1) $, $n=1,\dots,6$. $\mathbb{U}$ is a wall-crossing invariant, so
\begin{equation}
\mathbb{U}(\CC_i')=\mathbb{U}(\CC_j'),
\end{equation}
which can be recast into an equation for $\mathbb{U}_0$, to be solved order by order in the ``size'' of the BPS state, i.e. the total number of elementary charges of which it is composed. After re-factoring $\mathbb{U}_0$ in quantum dilogarithms, the contributions up to size 8 are
\begin{equation}\label{eq:SpectrumdP3real}
\begin{array}{|c|c|}
	\hline
	\gamma:Z_\gamma\in\mathbb{R}_+ & \Omega(\gamma;y) \\
	\hline\hline
	\gamma_5+\gamma_6 & 1\\
	\gamma_3+\gamma_4 & 1\\
	\gamma_1+\gamma_2 & 1\\
	\gamma_3+\gamma_4+\gamma_5+\gamma_6 & y^{-2}+1+y^2\\
	\gamma_1+\gamma_2+\gamma_3+\gamma_4 & y^{-2}+1+y^2 \\
	\gamma_3+\gamma_4+2(\gamma_5+\gamma_6) & y^{-2}+1+y^2 \\
	\gamma_1+\gamma_2+2(\gamma_3+\gamma_4) & y^{-2}+1+y^2 \\
	2(\gamma_3+\gamma_4)+\gamma_5+\gamma_6 & -\left(y^{-4}+y^{-2}+1+y^2+y^4 \right) \\
	2(\gamma_1+\gamma_2)+\gamma_3+\gamma_4 & -\left(y^{-4}+y^{-2}+1+y^2+y^4 \right) \\
	\vdots & \vdots \\
	\hline
	\hline
\end{array}
\end{equation}
To complete the spectrum, we have to include all the towers of hypermultiplet states:
\begin{equation}\label{eq:SpectrumdP3p}
\begin{array}{|c|c|}
	\hline
	\gamma:\,Z_\gamma\notin\mathbb{R} & \Omega(\gamma;y) \\
	\hline\hline
	\gamma_{1,3,5}+k\gamma_{D0}		& 1\\
  \gamma_{2,4,6}+k\gamma_{D0}   & 1\\
  -\gamma_{1,3,5}+(k+1)\gamma_{D0}    & 1\\
	-\gamma_{2,4,6}+(k+1)\gamma_{D0} 		& 1\\
	\gamma_{1,3}+\gamma_{3,5}+2k\gamma_{D0}		& 1\\
  \gamma_{2,4}+\gamma_{4,6}+2k+\gamma_{D0}    & 1\\
  -(\gamma_{1,3}+\gamma_{3,5})+2(k+1)\gamma_{D0}   & 1\\
	-(\gamma_{2,4}+\gamma_{4,6})+2(k+1)\gamma_{D0} 		& 1\\
	\gamma_{1,1,3}+\gamma_{2,3,5}+\gamma_{5,4,6}+k\gamma_{D0}	& 1 \\
	-(\gamma_{1,1,3}+\gamma_{2,3,5}+\gamma_{5,4,6})+(k+1)\gamma_{D0}	& 1 \\
	2\gamma_1+\gamma_2+\gamma_3+\gamma_4+\gamma_5+2k\gamma_{D0}	& 1 \\
	\gamma_1+2\gamma_3+\gamma_4+\gamma_5+\gamma_6+2k\gamma_{D0} 	& 1 \\
	-(2\gamma_1+\gamma_2+\gamma_3+\gamma_4+\gamma_5)+2(k+1)\gamma_{D0} 	& 1 \\
	-(\gamma_1+2\gamma_3+\gamma_4+\gamma_5+\gamma_6)+2(k+1)\gamma_{D0} 	& 1 \\
	\hline
\end{array}
\end{equation}
where the notation $\gamma_{i,j}+\gamma_{k,l}$ means that both the states $\gamma_i+\gamma_k$ and $\gamma_j+\gamma_l$ are present, and $k\ge0$. Note that, even though we deformed the original chamber, marginally stable states are still present, although these are only the higher-spin states on the real axis. Thus, the stability condition \eqref{eq:CC1defdP3p} still lies on a wall of marginal stability, although in a less evident way. However, since all the marginally stable states lie only on the real axis, the factorization \eqref{eq:UFactor} and the wall-crossing invariant we computed still makes sense, and in fact marginally stable stability conditions have already been used in the four-dimensional setting to efficiently compute the wall-crossing invariant $\mathbb{U}$ \cite{Longhi:2016wtv}.

Let us conclude this section with the following remark: it was proposed in \cite{Bonelli2020} that the BPS spectrum produced by the time evolution $T_4$ is naturally associated to a 5d uplift of the nonlagrangian Argyres-Douglas $(A_1,A_3)$ theory: upon decoupling the nodes $3$ and $6$ from the local $dP_3$ quiver, one obtains the appropriate four-dimensional BPS quiver, see Figure \ref{Fig:dP3T4}. In this respect, it is worth noting that in the q-Painlev\'e confluence diagram in Figure \ref{Fig:Sakai} there are also arrows from local $\mathbb{P}^1\times\mathbb{P}^1$ to the $(A_1,A_2)$ Argyres-Douglas theory ($A_1^{(1)}$ in the bottom row), and from local $\mathbb{P}^2$ to the $(A_1,A_1)$ theory ($A_0^{(1)}$ in the bottom row). 
\begin{figure}[h]
\begin{center}
\begin{subfigure}{.45\textwidth}
\centering
\includegraphics[width=.6\textwidth]{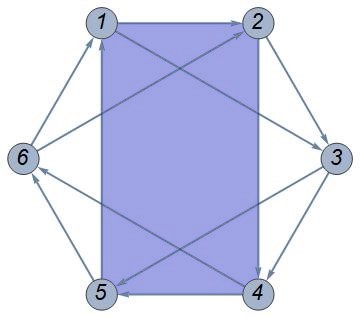}
\caption{Argyres Douglas subquiver of the local $dP_3$ subquiver}\label{Fig:dP3T4}
\end{subfigure}\hfill
\begin{subfigure}{.45\textwidth}
\centering
\includegraphics[width=.6\textwidth]{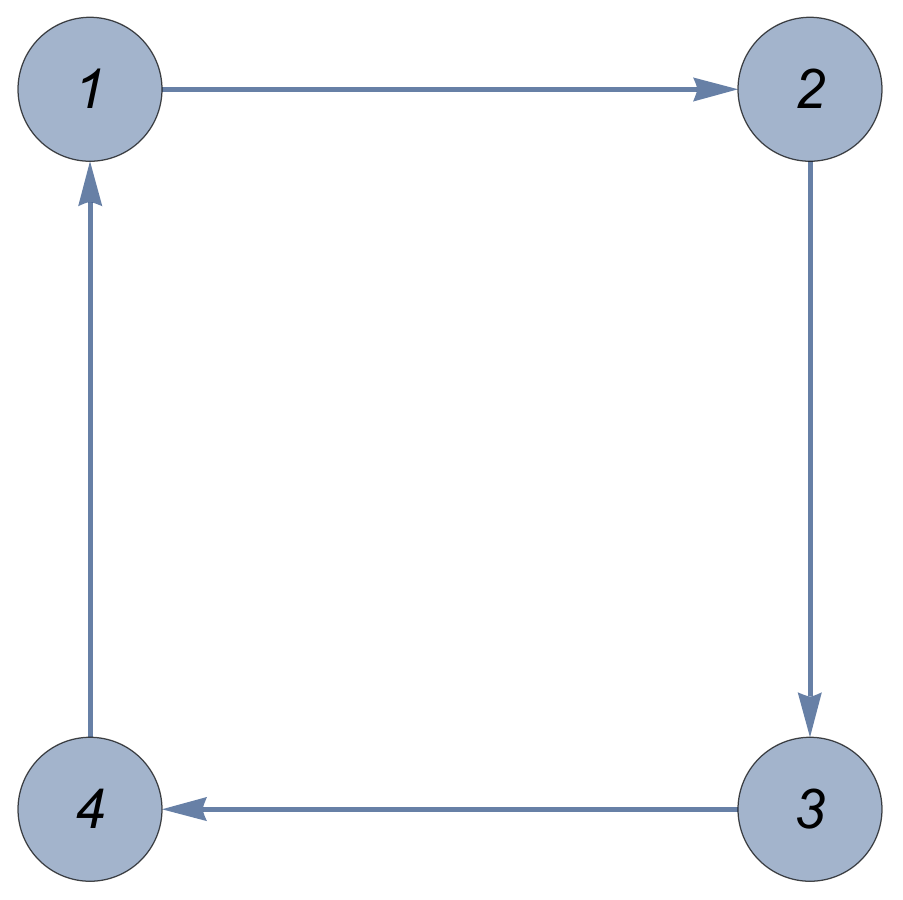}
\caption{BPS quiver for the four-dimensional $(A_1,D_4)$ Argyres-Douglas theory}
\label{Fig:QuiverD4}
\end{subfigure}
\end{center}
\end{figure}

The analysis of this section shows that such an uplift should display a wild spectrum, as opposed to the finite spectrum of the four-dimensional theory. Indeed, to fully go away from the wall of marginal stability we must further deform \eqref{eq:CC1defdP3p} in such a way that $Z_1,Z_3,Z_5$ no longer have the same imaginary part, and similarly for $Z_2,Z_4,Z_6$. This takes us away from the collimation region, as it opens up a cone where all the higher-spin states will lie.

\section{Conclusions and outlook}\label{sec:Concl}

The existence of algebraic solutions of the cluster IS proved to be a powerful tool to investigate BPS spectra: it allowed to find collimation chambers, both of tame and wild type (albeit this latter only seem to signal a marginally stable region near a properly wild chamber), and compute the corresponding BPS spectra and solutions to the TBA equations. BPS spectra of geometries that do not admit algebraic solutions can be obtained from those that do by an appropriate decoupling procedure. Having explored the correspondence between BPS spectra and algebraic solutions of cluster integrable systems for the five-dimensional theories with low-energy $SU(2)$ gauge theory phases with fundamental matter, there are still many directions to be explored.

\subsubsection*{Foldings of Cluster Integrable Systems and collimation chambers}

In this work we studied algebraic solutions of cluster integrable systems characterized by invariance under appropriate permutations symmetries of the corresponding quiver, and showed that they can be identified with solutions of the conformal TBAs associated to the BPS spectrum. However, there can be more general algebraic solutions invariant under foldings, i.e. cluster automorphisms involving also mutations \cite{Bershtein:2021gkq}. Do these algebraic solutions have a counterpart from the point of view of the BPS spectral problem? If so, the discussion of this paper would need to be quite nontrivially generalized, as (for example) a cluster automorphism involving mutations necessarily does not preserve a choice of positive half-plane, so that we cannot think in terms of a single stability condition anymore.

Another intriguing point is the following: folding transformations in general send solutions of one q-Painlev\'e equation to solutions of a different one, so that algebraic solutions solve more than one q-Painlev\'e equation, after appropriate redefinitions\footnote{The author thanks M. Bershtein for drawing his attention to this point.}. This could point to some new duality between different geometries, where the BPS spectra of two different theories could be identified in appropriate regions of moduli space, and is certainly a question worth investigating.

\subsubsection*{Decoupling limits and four-dimensional theories}

In Section \ref{sec:degens} we studied degeneration limits of stability conditions corresponding to the decoupling of a fundamental hypermultiplet in the low-energy gauge theory phase. Another natural limit from the physical point of view is the four-dimentional one, connecting the five-dimensional theory with the corresponding four-dimensional gauge theory. In the four dimensional setting, on the one hand there is the differential Painlev\'e equation, whose solution is an appropriate limit of the q-Painlev\'e solutions and whose connection to the four-dimensional gauge theory is well-understood through the Painlev\'e-gauge theory correspondence; on the other hand, the cluster variables appear in this setting as coordinates on the monodromy variety corresponding to the Painlev\'e equation \cite{chekhov2017painleve}. In some sense, in the five-dimensional setting monodromy and isomonodromy seem to be unified in a mysterious way begging for an explanation.

One could also consider some physically unexpected degenerations, that might be indicating the existence of (to the author's knowledge) unstudied limits from five-dimensional SCFTs to four-dimensional $\mathcal{N}=2$ theories. For example, we could continue the degeneration scheme of Section \ref{sec:degens}, starting from the collimation chamber stability condition \eqref{eq:CC1defP1P1}, that we write as
\begin{equation}
\CC_1^{(def)}(\mathbb{P}^1\times\mathbb{P}^1):\qquad Z_1=Z, \quad Z_3=Z-\Lambda_1, \quad Z_2=\bar{Z}+\Lambda_2, \quad Z_4=\bar{Z}+\Lambda_1+\Lambda_2,
\end{equation}
and take the limit $\Lambda_1\rightarrow\infty$. This decouples the nodes $\gamma_3,\,\gamma_4$ while keeping finite $\gamma_3+\gamma_4$. The resulting quiver is the 2-Markov quiver in Figure \ref{Fig:QuiverN2star},
\begin{figure}[h]
\begin{center}
\includegraphics[width=.35\textwidth]{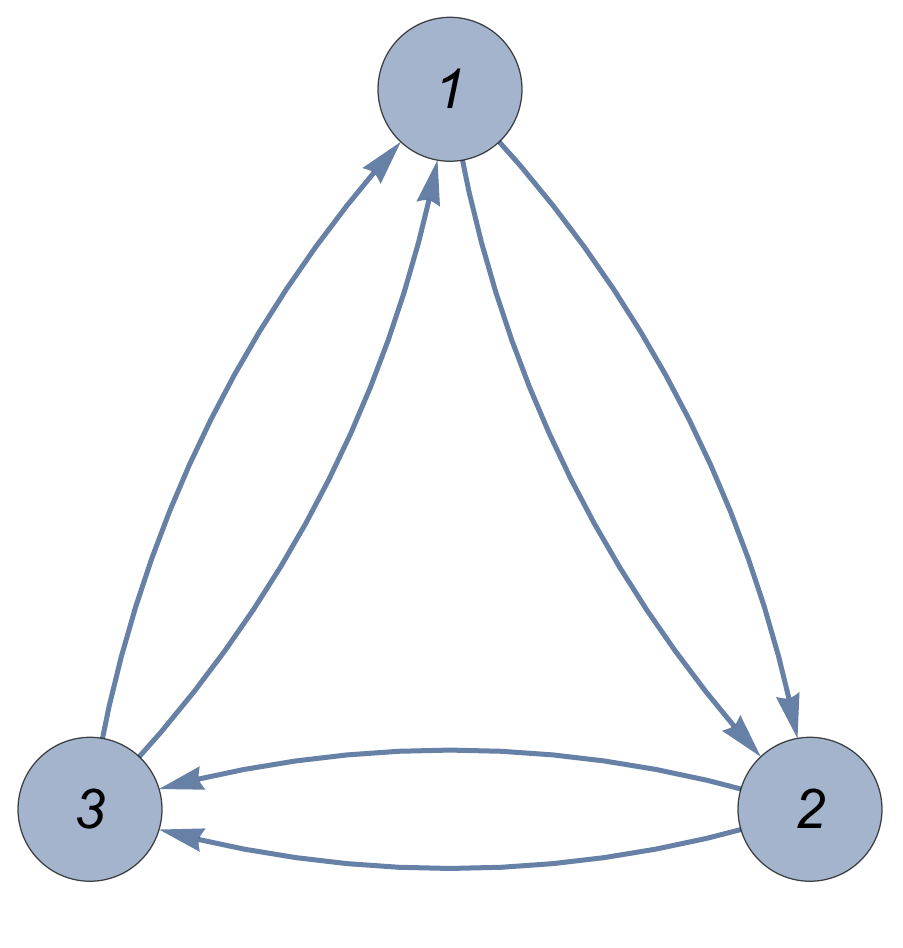}
\end{center}
\caption{BPS quiver for four-dimensional $\mathcal{N}=2^*$ theory}
\label{Fig:QuiverN2star}
\end{figure}
which is the BPS quiver of the $\mathcal{N}=2^*$ theory \cite{Alim:2011ae}, i.e. four-dimensional $SU(2)$ SYM with an adjoint hypermultiplet. It is the quiver associated to the character variety of the one-punctured torus $C_{1,1}$. After relabeling $\gamma_3+\gamma_4\mapsto\gamma_3$, the local $\mathbb{P}^1\times\mathbb{P}^1$ spectrum \eqref{eq:SpectrumP1P1} in this limit becomes the following:
\begin{equation}\label{eq:SpectrumMarkov}
\begin{array}{|c|c|}
  \hline
  \gamma & \Omega(\gamma;y) \\
  \hline\hline
  \gamma_1 + k v_1 & 1\\
  -\gamma_1 + (k+1) v_1 & 1\\
  \hline
  v_1+k\gamma_{f} & y+y^{-1} \\
  -v_1+(k+1)\gamma_{f} & y+y^{-1} \\
  (k+1)\gamma_{f} & y^3 + 4y+y^{-1}\\
  \hline
\end{array}
\end{equation}
with $k\geq  0$ and
\begin{align}
v_1=\gamma_1+\gamma_2, && \gamma_f=\gamma_1+\gamma_2+\gamma_3.
\end{align}
This is almost the same as the BPS spectrum of the $\mathcal{N}=2^*$ theory from \cite{Longhi:2015ivt}, eq. (1.18): the D0 brane did not decouple, but became the flavour charge $\gamma_f$ of the adjoint hypermultiplet in the 4d theory $\gamma_f:=\gamma_1+\gamma_2+\gamma_3$, associated to the monodromy coordinate around the puncture. The difference is that in the $\mathcal{N}=2^*$ spectrum computed in \cite{Longhi:2015ivt} one has a single state $\gamma_1+\gamma_2+\gamma_f$, while from the deocupling we obtained an entire tower of states $\gamma_1+\gamma_2+k\gamma_f$. At the moment, the author does not have a geometrical or physical intuition for this peculiar degeneration, nor does he understands why the limiting BPS spectrum is almost, but not completely, identical to that of the $\mathcal{N}=2^*$ theory.

Another application of Theorem \ref{thm:ExSo} would be to study directly four-dimensional theories, where to the author's knowledge the existence of semiclassically exact solutions to the TBA equations \eqref{eq:TBA-conformal} has not previously been observed. The simplest example for which this would occur is the four-dimensional Argyres Douglas theory of type $(A_1,D_4)$, with BPS quiver shown in Figure \eqref{Fig:QuiverD4}. Theorem \ref{thm:ExSo} implies that for this theory the stability condition \eqref{eq:CC1defP1P1}, that was used in the case of $\mathbb{P}^1\times\mathbb{P}^1$, would provide a semiclassically exact solution in this case as well.

\subsubsection*{Higher del Pezzo and higher rank geometries}

We want to conclude by pointing out that, while all cases we discussed in this paper had only one compact divisor, corresponding to $SU(2)$ gauge theories and genus 1 mirror curves, this is not a requirement of Theorem \ref{thm:ExSo}. Indeed, it is important to generalize to geometries with more than one compact divisors, which would have $SU(N)$ gauge theory phases and higher genus mirror curves, and were related to cluster integrable systems of the Toda chain type in \cite{Bershtein:2018srt}. An obstacle to doing this is that the permutation symmetry alone does not seem to completely fix the BPS spectrum, as it happened in the rank-1 case. 

Another interesting generalization would be to higher local del Pezzos, that cannot be described as 5d uplift of UV complete four-dimensional gauge theories (for example, note  the absence of vertical arrows on the upper leftmost cases in the confluence diagram \ref{Fig:Sakai}), but are likely related to a five-dimensional uplift of the so-called $E_n$ Minahan-Nemeschansky theories \cite{Minahan:1996fg,Minahan:1996cj}, whose BPS spectrum displays complicated structures already in four dimensions, and was studied by spectral network methods in \cite{Hollands:2016kgm,Distler:2019eky,Hao:2019ryd}. While a tame collimation chamber might not exist for these cases, we saw that the wall-crossing invariant can be computed by finding marginally stable collimation regions.

\begin{appendix}{}
\section{Affine Weyl groups through cluster transformations}\label{App:WeylGroups}
This Appendix summarizes the realization in the cluster modular group of affine Weyl groups associated to local del Pezzos, as in \cite{Bershtein2017,Mizuno2020}. For a review on the relation between q-Painlev\'e equations and affine Weyl groups, see the review article \cite{kajiwara2017geometric} and the book \cite{Joshi2019Book}. In the following, we denote by $s_i$ a simple reflection along the $i$-th simple root of the lattice.
\subsection{Local $dP_5$ and $\mathcal{Q}(D_5^{(1)})$}\label{app:dP5}
The quiver for this case is shown in Figure \ref{Fig:QuiverdP5}. The flavor lattice is $\Gamma_f\simeq\mathcal{Q}(D_5^{(1)})$ with simple roots given by \cite{Mizuno2020}
\begin{align}
	\alpha_0=\gamma_2-\gamma_1, && \alpha_1=\gamma_6-\gamma_5, && \alpha_2=\gamma_1+\gamma_5, \nonumber\\
	\alpha_3=\gamma_3+\gamma_7, && \alpha_4=\gamma_4-\gamma_3, && \alpha_5=\gamma_8-\gamma_7,\label{eq:D51roots}
\end{align}
and null root
\begin{equation}
	\delta=\sum_{i=1}^8\gamma_i=\alpha_0+\alpha_1+2\alpha_2+2\alpha_3+\alpha_4+\alpha_5.
\end{equation}
The intersection pairing between the cycles corresponding to $\alpha_i$ in the underlying del Pezzo geometry is (minus) the Cartan matrix of $D_5^{(1)}$:
\begin{equation}\label{eq:CartanD5}
\left(\alpha_i|\alpha_j \right)^{dP_5}=-C_{ij}^{(D_5^{(1)})},\qquad C^{(D_5^{(1)})}=\left( \begin{array}{cccccc}
2 & 0 & -1 & 0 & 0 & 0 \\
0 & 2 & -1 & 0 & 0 & 0 \\
-1 & -1 & 2 & -1 & 0 & 0 \\
0 & 0 & -1 & 2 & -1 & -1 \\
0 & 0 & 0 & -1 & 2 & 0 \\
0 & 0 & 0 & -1 & 0 & 2
\end{array} \right).
\end{equation}
The generators of the extended affine Weyl group associated to simple reflections are realized through mutations and permutations as
\begin{align}
	s_0=(1,2), && s_1=(5,6), && s_2=(1,5)\mu_1\mu_5, \nonumber\\
	s_3=(3,7)\mu_3\mu_7, && s_4=(3,4), && s_5=(7,8),\label{eq:dP5refl}
\end{align}
while the Dynkin diagram $\mathbb{Z}_4$ automorphism 
\begin{equation}
\pi:\left(\alpha_0,\,\alpha_1,\,\alpha_2,\,\alpha_3,\,\alpha_4,\,\alpha_5 \right)\mapsto\left(\alpha_5,\,\alpha_4,\,\alpha_3,\,\alpha_2,\,\alpha_0,\,\alpha_1 \right)
\end{equation}
is realized as
\begin{equation}
	\pi=(1,3,5,7)(2,4,6,8).
\end{equation}
\subsection{Local $dP_4$ and $\mathcal{Q}(A_4^{(1)})$}
The quiver for this case is shown in Figure \ref{Fig:QuiverdP4}. The flavour lattice is $\Gamma_f\simeq\mathcal{Q}(A_4^{(1)})$, with simple roots given by
\begin{equation}
\alpha_0=\gamma_1+\gamma_3+\gamma_6, \quad \alpha_1=\gamma_2-\gamma_1, \quad \alpha_2=\gamma_1+\gamma_5, \nonumber
\end{equation}
\begin{equation}
\alpha_3=\gamma_3+\gamma_7, \qquad \alpha_4=\gamma_4-\gamma_3,
\end{equation}
and null root
\begin{equation}
\delta=\sum_{i=0}^{4}\alpha_i.
\end{equation}
The intersection pairing between the cycles corresponding to $\alpha_i$ in the underlying del Pezzo geometry is (minus) the Cartan matrix of $A_4^{(1)}$:
\begin{equation}\label{eq:CartanA4}
\left(\alpha_i|\alpha_j \right)^{dP_4}=-C_{ij}^{(A_4^{(1)})},\qquad C^{(A_4^{(1)})}=\left( \begin{array}{ccccc}
2 & -1 & 0 & 0 & -1  \\
-1 & 2 & -1 & 0 & 0  \\
0 & -1 & 2 & -1 & 0  \\
0 & 0 & -1 & 2 & -1  \\
-1 & 0 & 0 & -1 & 2 
\end{array} \right).
\end{equation}

\subsection{Local $dP_3$ and $\mathcal{Q}((A_2+A_1)^{(1)})$}
The quiver for this case is shown in Figure \eqref{Fig:QuiverdP3}. The flavour lattice is $\Gamma_f\simeq\mathcal{Q}((A_2+A_1)^{(1)})$, with simple roots given by
\begin{align}
\alpha_0=\gamma_3+\gamma_6 , && \alpha_1=\gamma_1+\gamma_4, && \alpha_2=\gamma_2+\gamma_5,\nonumber
\end{align}
\begin{align}
\beta_0=\gamma_2+\gamma_4+\gamma_6, && \beta_1=\gamma_1+\gamma_3+\gamma_5,
\end{align}
and null root
\begin{equation}
	\delta=\sum_{i=0}^{2}\alpha_i+\sum_{i=0}^1\beta_i.
\end{equation}
The extended affine Weyl group is generated by the reflections
\begin{align}\label{eq:dP3A2Refl}
s_0=(3,6)\mu_6\mu_3 && s_1=(1,4)\mu_4\mu_1, && s_2=(2,5)\mu_5\mu_2,
\end{align}
\begin{align}
r_0=(4,6)\mu_2\mu_4\mu_6\mu_2, && r_1=(3,5)\mu_1\mu_3\mu_5\mu_1,
\end{align}
and by the outer automorphisms generating ${\rm Dih}_6$
\begin{align}\label{eq:dP3Dih6}
\pi=(1,2,3,4,5,6), && \sigma=(1,4)(2,3)(5,6)\iota\,.
\end{align}
Of these, $\pi$ is an order-six Dynkin diagram automorphism that permutes simple roots, see Figure \ref{Fig:A1Dynkin2}.
The intersection pairing between the cycles corresponding to $\alpha_i,\beta_i$ in the underlying del Pezzo geometry is (minus) the Cartan matrix of $(A_2+A_1)^{(1)}$:
\begin{equation}\label{eq:CartanA2A1}
\left(\begin{array}{cc}\left(\alpha_i|\alpha_j \right)^{dP_3} & 0 \\ 0 & \left(\beta_i|\beta_j \right)^{dP_3}\end{array}\right)=-C_{ij}^{((A_2+A_1)^{(1)})},\qquad C^{((A_2+A_1)^{(1)})} =\left( \begin{array}{ccccc}
2 & -1 & -1 & 0 & 0  \\
-1 & 2 & -1 & 0 & 0  \\
-1 & -1 & 2 & 0 & 0  \\
0 & 0 & 0 & 2 & -2  \\
0 & 0 & 0 & -2 & 2 
\end{array} \right).
\end{equation}
We consider three affine translations that leave fixed the $A_1^{(1)}$ sublattice
\begin{align}\label{eq:T1T2T3}
	T_1=\pi^2s_2s_1, && T_2=\pi^2s_0s_2, && T_3=\pi^2s_1s_0, 
\end{align}
\begin{align}
	T_1(\vec{\alpha},\vec{\beta})=\left(\alpha_0+\delta,\alpha_1,\alpha_2-\delta,\vec{\beta}\right), && T_2(\vec{\alpha},\vec{\beta})=\left(\alpha_0-\delta,\alpha_1+\delta,\alpha_2,\vec{\beta}\right), && T_1T_2T_3=id, \nonumber
\end{align}
and an affine translation that leaves fixed the $A_2^{(1)}$ sublattice
\begin{align}\label{eq:T4}
	T_4=r_0\pi^3, && T_4(\vec{\alpha},\vec{\beta})=\left(\vec{\alpha},\beta_0+\delta,\beta_1-\delta\right).
\end{align}
\subsection{Local $dP_2$ and $\mathcal{Q}((A_1+A_1)^{(1)})$}
The quiver for this case is shown in Figure \eqref{Fig:QuiverdP2}. The flavour lattice is $\Gamma_f\simeq\mathcal{Q}((A_1+A_1)^{(1)})$, with simple roots given by
\begin{align}
\alpha_0=\gamma_2+\gamma_4+\gamma_5 , && \alpha_1=\gamma_1+\gamma_3,\nonumber
\end{align}
\begin{align}
\beta_0=2\gamma_1+\gamma_2+3\gamma_4-\gamma_5, && \beta_1=-\gamma_1+\gamma_3-2\gamma_4+2\gamma_5 ,
\end{align}
and null root
\begin{equation}
	\delta=\sum_{i=0}^{2}\alpha_i+\sum_{i=0}^1\beta_i.
\end{equation}
The extended affine Weyl group is generated by the reflections
\begin{align}\label{eq:dP3A2Refl}
s_0=(4,5)\mu_2\mu_5\mu_4\mu_2, && s_1=(1,3)\mu_1\mu_3,
\end{align}
The translation
\begin{align}
T=(2,1,4,5,3)(\mu_4),
\end{align}
and the involution
\begin{equation}
\sigma=\iota(1,3)(4,5)T.
\end{equation}
The intersection pairing between the cycles corresponding to $\alpha_i,\beta_i$ in the underlying del Pezzo geometry is (minus) the Cartan matrix of $(A_1+A_1)^{(1)}$ (with a nonstandard normalization of the simple roots for the second $A_1^{(1)}$ root lattice):
\begin{equation}\label{eq:CartanA1A1}
\left(\begin{array}{cc}\left(\alpha_i|\alpha_j \right)^{dP_2} & 0 \\ 0 & \left(\beta_i|\beta_j \right)^{dP_2}\end{array}\right)=-C_{ij}^{((A_1+A_1)^{(1)})},\qquad C^{((A_1+A_1)^{(1)})} =\left( \begin{array}{cccc}
2 & -2  & 0 & 0  \\
-2 & 2  & 0 & 0  \\
0 & 0 & 14 & -14  \\
0 & 0 & -14 & 14 
\end{array} \right).
\end{equation}

\subsection{Local $dP_1$ and $\mathcal{Q}(A_1^{(1)})$}
The quiver for this case is shown in Figure \eqref{Fig:QuiverdP1}. The flavour lattice is $\Gamma_f\simeq\mathcal{Q}(A_1^{(1)})$, with simple roots given by
\begin{align}
\alpha_0=\gamma_1-\gamma_2+2\gamma_3 , && \alpha_1=\gamma_2-\gamma_3+\gamma_4,\nonumber
\end{align}
and null root
\begin{equation}
	\delta=\alpha_0+\alpha_1.
\end{equation}
In this case the Cremona isometries lead only to a translation
\begin{equation}
T=(1,3,2,4)\mu_3
\end{equation}
and an involution
\begin{equation}
\sigma=\iota(1,4)(2,3).
\end{equation}
The intersection pairing between the cycles corresponding to $\alpha_i$ in the underlying del Pezzo geometry is (minus) the Cartan matrix of $A_1^{(1)}$ with an unusual normalization:
\begin{equation}\label{eq:CartanA1dP1}
\left(\alpha_i|\alpha_j \right)^{\mathbb{P}^1\times\mathbb{P}^1} =-C_{ij}^{(A_1^{(1)})},\qquad C^{(A_1^{(1)})} =\left( \begin{array}{cc}
8 & -8  \\
-8 & 8  \\
\end{array} \right).
\end{equation}
\subsection{Local $\mathbb{P}^1\times\mathbb{P}^1$ and $\mathcal{Q}(A_1^{(1)})$}
The quiver for this case is shown in Figure \eqref{Fig:QuiverP1P1}. The flavour lattice is $\Gamma_f\simeq\mathcal{Q}(A_1^{(1)})$, with simple roots given by
\begin{align}
\alpha_0=\gamma_1+\gamma_3 , && \alpha_1=\gamma_2+\gamma_4,\nonumber
\end{align}
and null root
\begin{equation}
	\delta=\alpha_0+\alpha_1.
\end{equation}
The extended affine Weyl group is generated by the reflections
\begin{align}\label{eq:dP3A2Refl}
s_0=(1,3)\mu_1\mu_3, && s_1=(2,4)\mu_2\mu_4
\end{align}
and the involution
\begin{equation}
\pi=(1,2,3,4).
\end{equation}
The intersection pairing between the cycles corresponding to $\alpha_i$ in the underlying del Pezzo geometry is (minus) the Cartan matrix of $A_1^{(1)}$ :
\begin{equation}\label{eq:CartanA1F0}
\left(\alpha_i|\alpha_j \right)^{\mathbb{P}^1\times\mathbb{P}^1} =-C_{ij}^{(A_1^{(1)'})},\qquad C^{(A_1^{(1)'})} =\left( \begin{array}{cc}
2 & -2  \\
-2 & 2  \\
\end{array} \right).
\end{equation}

\section{Discrete equations for 4d Super Yang-Mills, and the 2-Kr\"onecker quiver}\label{App:4dSYM}

I remarked in the main text that q-Painlev\'e equations can be regarded as Y-systems of the TBA equations \eqref{eq:TBA-conformal}. The usual notion of Y-system, however, arises by shifting the phase of $\epsilon$, while in q-Painlev\'e we have shifts of the moduli. This seems to be the correct generalization of a Y-system to the case of an infinite chamber, and is not limited to\footnote{I am grateful to K. Ito for many interesting discussions about Y-systems and the ODE/IM correspondence, that led me to better appreciate this point.} five-dimensional theories. To show this, I will now show how a similar perspective can be adopted in the case of weakly coupled 4d Super Yang-Mills, corresponding to the infinite chamber of the 2-Kr\"onecker quiver. 
\begin{figure}[h]
\begin{center}
\includegraphics[width=.35\textwidth]{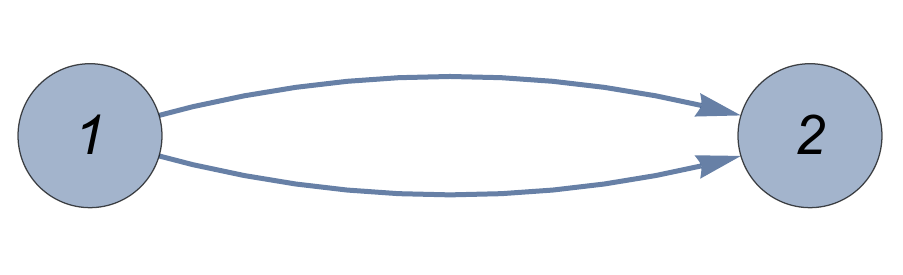}
\end{center}
\caption{BPS quiver of four-dimensional pure $SU(2)$ SYM}
\label{Fig:QuiverN2star}
\end{figure}
The moduli space of stability conditions is divided into two chambers: a finite (strongly coupled) chamber with $\arg Z_1<\arg Z_2$, and an infinite (weakly coupled chamber with $\arg Z_1>\arg Z_2$. The spectrum in the weakly coupled chamber consists of a vector multiplet with charge $\gamma_1+\gamma_2$, and infinite towers of dyons $\pm\gamma_1+k(\gamma_1+\gamma_2)$, $k\in\mathbb{Z}$. The central charges are periods of the Seiberg-Witten differential $\lambda_{SW}:=y\dd x$ on the Seiberg-Witten curve
\begin{equation}
	\Sigma:\,y^2=Q_0(x),\qquad Q_0(x)x^2=\Lambda^2\left(x+\frac{1}{x} \right)+u.
\end{equation}
Let $a:=\lim_{\Lambda\rightarrow 0}Z_{\gamma_1+\gamma_2}$, and let $\eta\sim\lim_{\Lambda\rightarrow 0}Z_1$, where $\sim$ includes a renormalization factor because $\eta$ is singular in the limit. We can view these as Fenchel-Nielsen coordinates, since the limit $\Lambda\rightarrow 0$ corresponds to a degeneration of the Seiberg-Witten curve. The requirement of a renormalization factor leaves ambiguities in the definition of $\eta$, which however do not affect our discussion. See \cite{Coman:2020qgf} for a discussion on how to fix such ambiguities, but in our present case this albiguities correspond to the choice of different solutions in our difference equations.
\begin{figure}[h]
\begin{center}
\includegraphics[width=.7\textwidth]{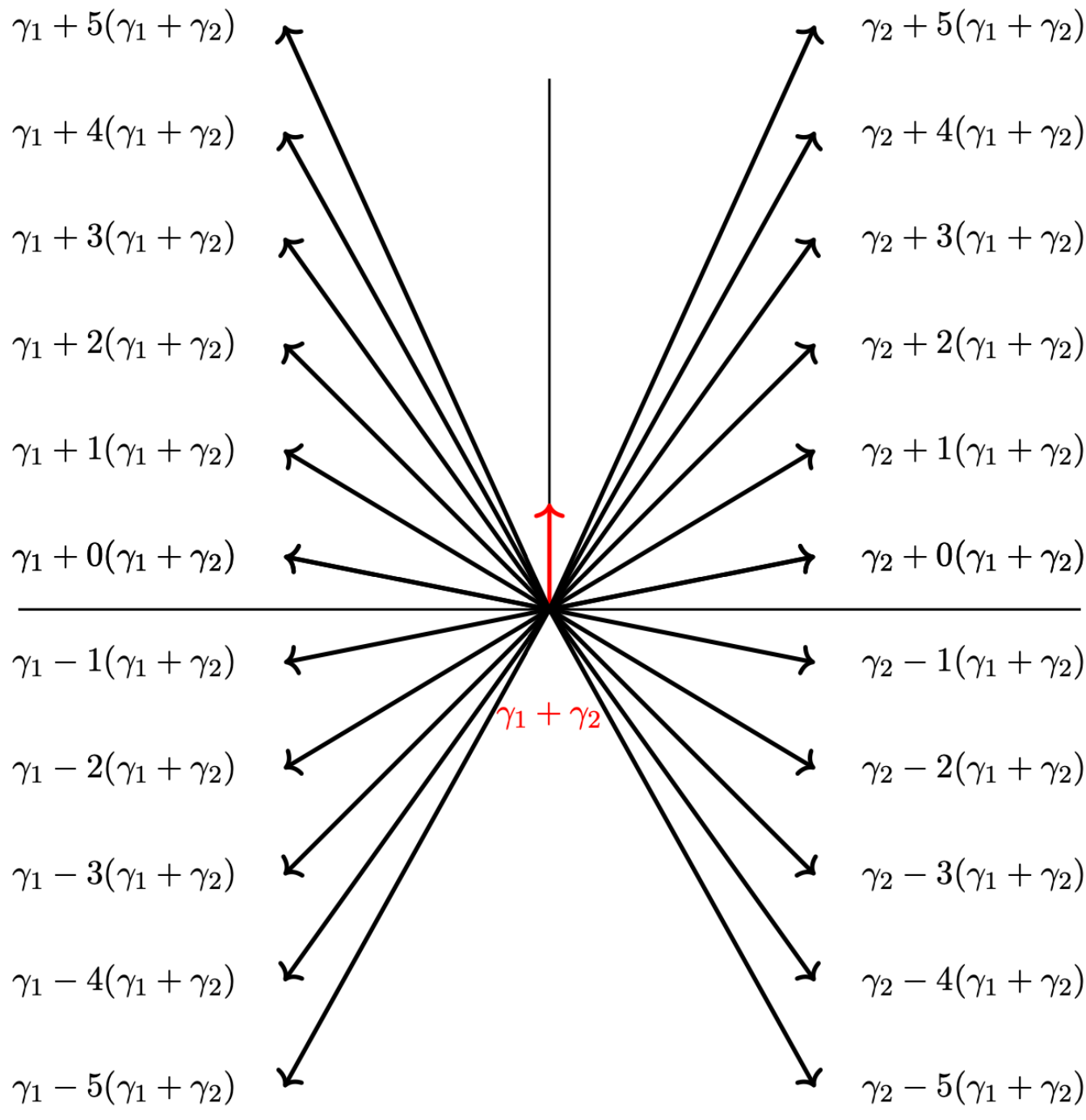}
\end{center}
\caption{Weakly-coupled spectrum of four-dimensional pure $SU(2)$ SYM}
\label{Fig:QuiverN2star}
\end{figure}
Using the same arguments of Section \ref{sec:dP5}, observe that if we send $Z_1\mapsto Z_1+(Z_1+Z_2)$, $Z_2\mapsto Z_2-(Z_1+Z_2)$, which we can interpret as the discrete time evolution $T=(1,2)\mu_1$, which acts on our coordinates as
\begin{equation}
	T(a,\eta)=(a,\eta+a).
\end{equation}
It leads to the find the following equations for $\CX_1,\,\CX_2$:
\begin{equation}
	\CX_1(a,\eta+a)\CX_1(a,\eta-a)=\left(1+\CX_1(a,\eta)^{-1} \right)^{-2},\nonumber
\end{equation}
\begin{equation}
	\CX_2(a,\eta+a)\CX_2(a,\eta-a)=\left(1+\CX_2(a,\eta) \right)^2. 
\end{equation}
Similar equations appeared in \cite{Cecotti:2014zga}, eq. 2.18, where a ``would be Y-system" associated to this chamber was written as a formal recurrence relation $Y_{n+1}Y_{n-1}=(1+Y_n)^2$ (in fact this equation can already be found in Appendix A of \cite{Gaiotto:2008cd}, where also an explicit solution to the recursion is provided), which would correspond to the second of the above equations. However, the discrete variable $n$ therein had the meaning of a shift in the phase of $\hbar$. For infinite chambers, such a shift is not constant, so it is not clear how the recursion relation can be turned into a difference equation involving $\hbar$. Instead, it is easy to check that these equations are solved by
\begin{align}
	\CX_1(a,\eta)=-\left(\frac{\sin2\pi a}{\sin2\pi\eta} \right)^2, && \CX_2(a,\eta)=-\left(\frac{\sin2\pi(a+\eta)}{\sin2\pi a} \right),
\end{align}
which matches exactly with the change from Fock-Goncharov coordinates $\CX_i$ to Fenchel-Nielsen coordinates $a,\,\eta$ from \cite{Coman:2020qgf}. This should also be related to equation (5.44) in \cite{Grassi:2019coc}, representing the transformation between Borel resummed quantum periods and quantum periods from instanton counting.
\newpage
\section*{Statements and Declarations}

\textbf{Data availability statement:} Data sharing not applicable to this article as no datasets were generated or analysed during the current study.

\noindent\textbf{Competing interests:} The author has no relevant financial or non-financial interests to disclose.

\noindent\textbf{Funding:} The early stages of this work were completed during the author's residence at the Isaac Newton Institute during the Fall 2022 semester and he thanks the M\o ller institute and organizers of the program "Applicable resurgent asymptotics: towards a universal theory" for hospitality during his stay, supported by EPSRC grant no EP/R014604/1.

\end{appendix}

\bibliographystyle{alph}
\bibliography{biblio.bib}

\end{document}